\documentclass[aps,prx,reprint,groupedaddress,longbibliography]{revtex4-2}

\usepackage{amsmath,amssymb,amsthm}
\usepackage{graphicx}
\usepackage{xcolor}
\usepackage{bm}
\usepackage[colorlinks=true,linkcolor=blue,citecolor=blue,urlcolor=blue]{hyperref}

\newcommand{\E}{\mathbb{E}}
\newcommand{\Prob}{\mathbb{P}}
\newcommand{\hist}{\mathcal{F}}
\newcommand{\nnode}{n_{\mathrm{node}}}
\newcommand{\gnode}{\gamma_{\mathrm{n}}}
\newcommand{\gedge}{\gamma_{\mathrm{e}}}
\newcommand{\loglik}{\ell}
\newcommand{\logT}{\loglik_{\mathrm{T}}}
\newcommand{\logS}{\loglik_{\mathrm{S}}}
\newcommand{\thT}{\bm{\theta}_{\mathrm{T}}}
\newcommand{\thS}{\bm{\theta}_{\mathrm{S}}}
\newcommand{\thW}{\bm{\theta}_{\mathrm{W}}}
\newcommand{\logW}{\loglik_{\mathrm{W}}}
\newcommand{\Cov}{\operatorname{Cov}}
\newcommand{\Ent}{\mathrm{H}}

\newtheorem{proposition}{Proposition}
\newtheorem{lemma}{Lemma}
\newtheorem{corollary}{Corollary}
\newtheorem{remark}{Remark}

\newcommand{\synN}{40}
\newcommand{\synT}{600}
\newcommand{\synEvents}{1553}
\newcommand{\synTrueNnode}{0.50}
\newcommand{\synTrueKappa}{20.0}
\newcommand{\synObsNnode}{0.491}
\newcommand{\synObsNnodeSE}{0.021}
\newcommand{\synObsKappa}{19.45}
\newcommand{\synObsKappaSE}{1.43}
\newcommand{\synLatNnode}{0.343}
\newcommand{\synLatKappa}{24.29}
\newcommand{\synLatCorr}{-0.60}
\newcommand{\synSEInflationNnode}{1.08}
\newcommand{\synSEInflationKappa}{1.34}

\newcommand{\sweepReps}{30}          
\newcommand{\sweepEventsLow}{53}       
\newcommand{\sweepEventsHigh}{14{,}032} 
\newcommand{\sweepCells}{64}
\newcommand{\sweepCellsOK}{31}
\newcommand{\sweepEventsMinOK}{194}

\newcommand{\gofKSHighSchool}{0.094}
\newcommand{\gofACFHighSchool}{-6.8}
\newcommand{\gofMeanTau}{0.889}
\newcommand{\gofVarTau}{0.877}
\newcommand{\gofZeroAtom}{7.4\%}
\newcommand{\gofNearZeroAtom}{9.4\%}
\newcommand{\gofTiedFraction}{8.8\%}
\newcommand{\gofKSHighSchoolDetied}{0.059}
\newcommand{\gofACFHighSchoolDetied}{-7.1}
\newcommand{\gofPowerLawKS}{0.24}
\newcommand{\gofPowerLawACF}{-0.6}
\newcommand{\gofPowerLawEvents}{1834}

\newcommand{\planeMhospital}{0.049}
\newcommand{\planeMconf}{0.046}
\newcommand{\planeMschool}{0.005}
\newcommand{\shuffleHo}{0.975}          
\newcommand{\shuffleHs}{0.899}          
\newcommand{\fitHsNnode}{0.892}

\newcommand{\snnM}{+0.34}               
\newcommand{\snnB}{-0.03}               
\newcommand{\snnR}{1.00}                
\newcommand{\snnNnode}{0.908}
\newcommand{\snnNnodeShuf}{0.909}
\newcommand{\snnRefAcfZ}{-7.9}
\newcommand{\snnRefNnode}{0.888}
\newcommand{\snnRefNnodeShuf}{0.890}
\newcommand{\snnRefR}{1.00}
\newcommand{\pairShufNull}{0.00}
\newcommand{\snnBurstAtom}{90.7\%}     

\newcommand{\fssDropSmallE}{0.34}       
\newcommand{\fssDropLargeE}{0.16}       
\newcommand{\fssErangeLo}{124}
\newcommand{\fssErangeHi}{1691}
\newcommand{\emShufM}{+0.02}            
\newcommand{\tieReuseSchool}{0.77}      
\newcommand{\tieUsesPerEdge}{4.3}       
\newcommand{\deTieHoM}{0.049}           
\newcommand{\deTieHoR}{0.99}            
\newcommand{\deTieHsM}{0.005}           
\newcommand{\deTieHsR}{1.00}            
\newcommand{\eStepBiasSyn}{-30\%}       
\newcommand{\eStepBiasSnn}{-12\%}       
\newcommand{\snnObsNnode}{0.909}
\newcommand{\snnLatNnode}{0.802}
\newcommand{\emObsNnode}{0.124}
\newcommand{\emLatNnode}{0.277}
\newcommand{\fssWidthSlope}{+0.16}
\newcommand{\fssElo}{67}
\newcommand{\fssEhi}{3324}
\newcommand{\sumExpSchoolSingle}{0.91}
\newcommand{\sumExpSchoolMix}{0.95}
\newcommand{\sumExpSchoolGain}{1200}
\newcommand{\sumExpHospSingle}{0.99}
\newcommand{\sumExpHospMix}{0.97}
\newcommand{\sumExpHospGain}{175}

\newcommand{\hsNodes}{327}
\newcommand{\hsNodesActive}{302}
\newcommand{\hsRawSamples}{188508}
\newcommand{\hsSpanHours}{101}
\newcommand{\hsDays}{5}

\newcommand{\hsDayZeroContacts}{12489}
\newcommand{\hsDayZeroNnode}{0.863}
\newcommand{\hsDayZeroNnodeSE}{0.009}
\newcommand{\hsDayZeroKappa}{57.5}
\newcommand{\hsDayZeroKappaSE}{1.5}
\newcommand{\hsDayZeroNodeMemory}{435}     
\newcommand{\hsDayZeroTieMemory}{28300}    
\newcommand{\hsDayZeroCorr}{+0.12}

\newcommand{\sfNodes}{403}
\newcommand{\sfRawSamples}{70261}
\newcommand{\sfSpanHours}{31.8}

\newcommand{\hoNodes}{75}
\newcommand{\hoRawSamples}{32424}
\newcommand{\hoSpanHours}{96.5}

\newcommand{\sessionGap}{60}           
\newcommand{\samplingGrid}{20}         
\newcommand{\minContacts}{5}

\newcommand{\fitHoNnode}{0.956}

\newcommand{\gofEmContacts}{16975}
\newcommand{\gofEmN}{562}

\newcommand{\perdayHsNnode}{0.808 \pm 0.116}
\newcommand{\perdayHsKappa}{74.0 \pm 17.1}
\newcommand{\perdaySfNnode}{0.813 \pm 0.058}
\newcommand{\perdaySfKappa}{19.4 \pm 4.5}
\newcommand{\perdayHoNnode}{0.950 \pm 0.009}
\newcommand{\perdayHoKappa}{12.7 \pm 0.7}
\newcommand{\gofBootReps}{100}
\newcommand{\gofHsCalibP}{0}                 
\newcommand{\gofHsCalibPSel}{0}             
\newcommand{\gofHsNullMedianKS}{0.0168}
\newcommand{\gofHoCalibPTiming}{0}
\newcommand{\gofHoCalibPSelection}{0.99}    
\newcommand{\gofHoNullMedianKS}{0.0206}
\newcommand{\gofSfCalibPTiming}{0}
\newcommand{\gofSfCalibPSelection}{1}       
\newcommand{\gofSfNullMedianKS}{0.0218}
\newcommand{\schoolShortDayNnode}{0.576}

\newcommand{\epiBeta}{0.15}            
\newcommand{\epiNode}{0.42}            
\newcommand{\epiLink}{0.63}            
\newcommand{\epiSisNode}{0.33}         
\newcommand{\epiSisLink}{0.63}         
\newcommand{\ridgeFold}{1.9}           

\newcommand{\fitUnderFold}{2.5}         
\newcommand{\posConEvNode}{600}         
\newcommand{\posConNlo}{0.20}\newcommand{\posConNloFit}{0.204}
\newcommand{\posConNhi}{0.80}\newcommand{\posConNhiFit}{0.806}
\newcommand{\posConKlo}{5}\newcommand{\posConKloFit}{4.9}
\newcommand{\posConKhi}{50}\newcommand{\posConKhiFit}{49.9}
\newcommand{\confSynN}{100}            
\newcommand{\confSynSmooth}{-0.63}     
\newcommand{\groupAtomDy}{0.034}       
\newcommand{\groupAtomGp}{0.003}       
\newcommand{\groupZdy}{4.3}            
\newcommand{\groupZgp}{4.6}            
\newcommand{\groupNnode}{0.92}         

\begin{document}

\title{Fundamental limits to identifying node and tie memory in temporal networks:\\
marginal artefacts and spreading dynamics}

\author{Michele Tizzani}
\affiliation{Technical University of Denmark (DTU), Kongens Lyngby, Denmark}

\date{\today}

\begin{abstract}
Temporal-network models attribute memory in contact data to either node self-excitation (branching ratio $\nnode$) or tie reinforcement ($\kappa$), carrying major consequences for epidemic spreading. We prove that when event initiators are observed, the two mechanisms are orthogonal: the Fisher information is block-diagonal and neither trades off against the other. In undirected proximity data, where initiators are unobserved, marginalising over them couples the mechanisms into a structural confound that survives posterior smoothing. On empirical proximity, messaging, and email records, however, a cruder failure dominates: fitted node memory is pinned to the inter-event marginal law and remains virtually invariant across latent label posterior samples (coefficient of variation below $1\%$). An inter-event-order shuffle test and burstiness--memory diagnostics reveal that exponential-Hawkes node memory is recovered from none, while tie reinforcement remains identifiable throughout. This near-unidentifiability is intrinsic, not an artefact of the exponential kernel: refitting flexible scale-free (sum-of-exponentials) kernels on synthetic power-law self-exciting processes fails to distinguish genuine node memory from memoryless renewal controls, with identical collapses recurring on algorithmic networks (edit bots, cloud microservices) and cortical spiking. Downstream epidemic consequences are quantitative: simulations fitted to empirical contact records under-predict outbreak sizes by up to a factor of $2.5$ and shift the epidemic threshold. We conclude that observational temporal networks face a two-fold identifiability boundary: contact directionality is essential to decouple tie reinforcement, whereas heavy-tailed node self-excitation is intrinsically unidentifiable from contact timings alone.
\end{abstract}

\maketitle

\begin{quotation}
\noindent\textbf{Popular Summary:} How fast an epidemic or rumour spreads depends on \emph{why} people meet repeatedly: because certain individuals are intermittently active (node-driven memory), or because certain pairs repeatedly reconnect (link-driven memory). These two mechanisms yield identical static statistics yet lead to divergent epidemic spreading. We ask the inverse question: given a temporal contact record, can we identify which mechanism is responsible? We prove that when event initiators are observed, the two memories decouple completely. But ordinary proximity data do not record initiation, and there we show that standard node-memory estimates are artefacts: they merely re-describe inter-event waiting times rather than true self-excitation, while tie reinforcement remains recoverable. This is not an artefact of a single kernel or noisy human data: flexible scale-free models cannot rescue node memory, and identical failures recur on algorithmic networks (web bots, cloud microservices). Heavy-tailed node memory is thus intrinsically unidentifiable from contact streams. Getting the carrier wrong is not harmless: it shifts predicted epidemic thresholds and alters outbreak sizes by up to a factor of 2.5. Our findings delineate fundamental physical boundaries for temporal networks: contact directionality is essential to isolate tie reinforcement, whereas heavy-tailed node self-excitation remains unidentifiable from contact timings alone.
\end{quotation}

\begin{figure*}[t]
  \centering
  \includegraphics[width=\textwidth]{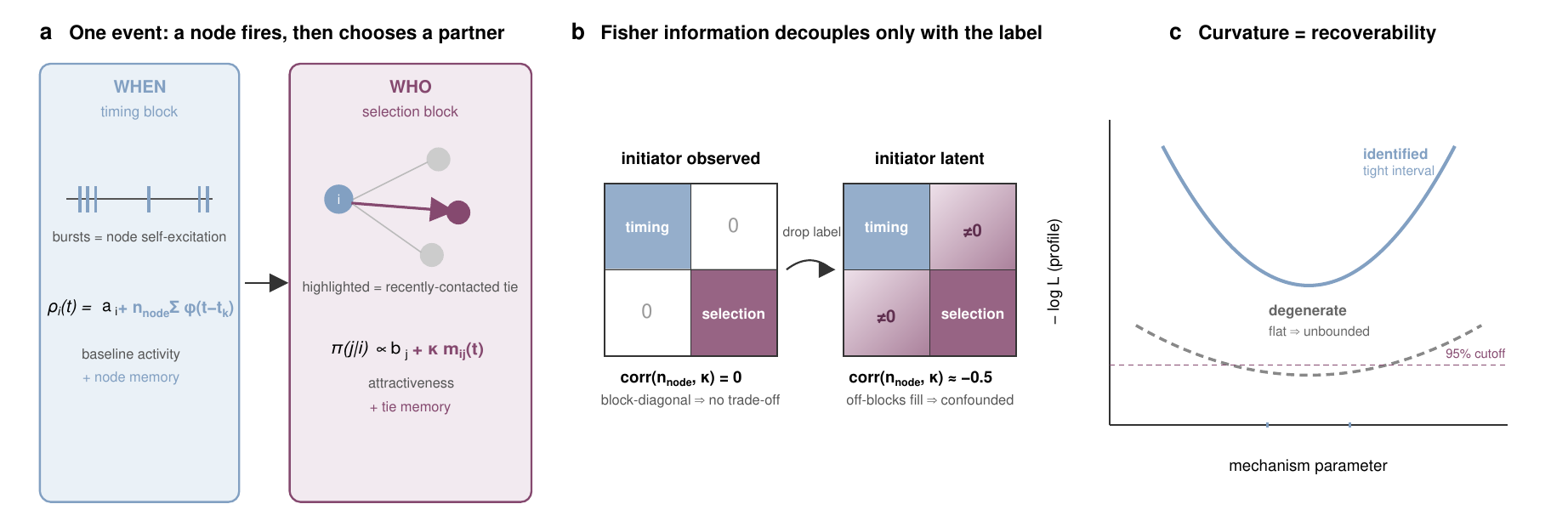}
  \caption{\textbf{Two separable memories.} (a)~Each event is a node firing (timing block,
  blue) followed by a choice of target (selection block, red): node self-excitation
  $\nnode$ enters only the timing block, tie reinforcement $\kappa$ only the selection
  block. (b)~With the initiator observed the Fisher information is block diagonal and
  $\mathrm{corr}(\nnode,\kappa)=0$ exactly; dropping the label fills the off-diagonal
  blocks and the two mechanisms confound ($\mathrm{corr}\approx-0.5$ when node memory is
  genuine).
  (c)~The curvature of the profile likelihood, not the location of its optimum, is what
  decides whether a mechanism is recoverable.}
  \label{fig:theory}
\end{figure*}

Generative models of temporal networks fall into two broad families. In the first, the
dynamics are carried by \emph{node} properties: each node has an activity potential
governing how often it engages, and links are a by-product of node activation. This is
the activity-driven class~\cite{Perra2012} together with its non-Markovian extensions,
in which a node's propensity to act depends on its own recent
history~\cite{Karsai2014,Ubaldi2016,Ubaldi2017}. In the second, the dynamics are carried
by the \emph{links} themselves: a tie that has been used recently is more likely to be
used again, independently of any node-level state. This is the mechanism behind
edge-level memory models~\cite{Vestergaard2014,Williams2022PRE} and, in a different
tradition, behind dyadic self-exciting point processes and relational event
models~\cite{Butts2008,Hawkes1971}.

The two families are usually treated as modelling choices rather than as competing
hypotheses about data, and for good reason: both reproduce the coarse statistics that
temporal-network studies typically report: broad activity distributions, bursty
inter-event times, heavy-tailed edge weights, slow topological turnover. Reviews of the
field~\cite{HolmeSaramaki2012,Holme2015,MasudaLambiotte2016} document the proliferation
of mechanisms far better than they document how one would tell them apart.

This paper poses the inverse problem directly, and answers it in the negative for the
data most commonly used. \emph{Given an observed contact sequence, can the responsible
mechanism be identified?} For undirected proximity data (the face-to-face records that
dominate the empirical literature), it cannot, and the standard estimate
misleads: a fitted node self-excitation is not a measurement of memory but a
re-description of the inter-event-time distribution. We establish this in two stages that
we keep separate throughout, because they are different failures: a \emph{structural}
limit (without the initiator label, node and tie memory are mathematically inseparable)
and, on real data, a \emph{cruder} one that pre-empts it (the fitted node parameter is set
by the inter-event marginal and is inert to the missing label). The second failure is not a
defect of the exponential kernel, nor of unlabelled human data. Refitting a flexible
scale-free (sum-of-exponentials) node kernel does not rescue it, and it recurs on directed,
fully labelled, machine-generated networks (web-editing bots, microservice call graphs): on
a synthetic power-law self-exciting process the temporal-order signature is spread so thinly
across lags that no kernel we tested separates it from a memoryless renewal control. The
heavy-tailed node self-excitation that real systems actually exhibit is thus intrinsically
near-unidentifiable from a contact sequence. This does not contradict the structural result:
node memory is recoverable in principle when the initiator is observed \emph{and} the memory
is concentrated enough to leave a resolvable order signature, as it is for the exponential
kernel in controlled tests. It is the heavy-tailed regime, not node memory of every
kind, that a contact sequence cannot resolve. The question matters because the mechanisms are not interchangeable for spreading:
they concentrate transmission differently in time and along different structures, and a
control policy that targets bursty individuals is not the one that targets reinforced
ties. This choice has a measurable impact. Simulating from a model fitted to a real high-school contact
record (whose fitted node memory we show is an artefact) under-predicts the real outbreak
by up to a factor of $2.5$, and the same bias recurs on an animal (baboon) proximity record, so it is not specific to one dataset or species. Furthermore, two networks with identical coarse statistics but different memory placements produce epidemics that vary by a similar margin. If the available data cannot distinguish between these two scenarios, this should be acknowledged before letting modeling conventions dictate the outcome.

This contribution has three parts. First, we give a parametrisation in which the question is well posed
(the Methods). A natural formulation, writing the dyadic intensity as a convex
mixture of a node-driven and a link-driven term and reading the mixing weight as the
mechanism, turns out to be unidentified, because the mixing weight absorbs the
arbitrary scale of the node activities. We replace it with three separate
mechanisms: a memoryless baseline activity, a node branching ratio $\nnode$, and a tie
reinforcement strength $\kappa$, each dimensionless and pinned by explicit
normalisations.

Second, we establish a structural identifiability result
(the Methods; Fig.~\ref{fig:theory}). When the \emph{initiator} of each
contact is observed, the log-likelihood separates into a timing block containing
only the node parameters and a selection block containing only the link parameters. The Fisher
information is block diagonal and the two mechanisms cannot trade off, being informed by
disjoint statistics; node self-excitation is then recovered from a node's own recurrence
in time, with no trade-off against $\kappa$. This does not require the label to be
literally recorded: the confound below scales with the entropy of the initiator posterior,
so a strongly asymmetric record, in which who acted is nearly certain, identifies the two
mechanisms even without explicit labels. Face-to-face proximity data is the opposite
limit: it carries no initiator label and the posterior is maximally ambiguous.
Marginalising over it destroys the separation and couples $\nnode$ and $\kappa$, a confound
that is real and survives proper smoothing, wherever the data carry genuine node memory. On
these proximity records a second, label-independent failure turns out to dominate: the
fitted node self-excitation is set by the inter-event marginal rather than by memory, so the
point estimate is uninformative however the labels are resolved. We keep the parametric
estimate, which is unsafe on these data, separate from the model-free memory \emph{feature}
(the lag-one memory coefficient and the inter-event-order shuffle), which is read directly
from a node's recurrence times, needs no label, and is what our empirical analysis actually
relies on. This separation is one instance of a general
\emph{attribution--orthogonality} principle we prove (the Methods): observed per-event
attribution labels leave every mechanism orthogonal, while a latent label shared by two
mechanisms confounds them through a single missing-information object. The result therefore
extends unchanged to directed, weighted and marked networks: an interaction weight is
identifiable exactly when it is attributed to the label-free pair (e.g. a contact duration)
rather than to the initiator, so directedness, weighting and marking are one identifiability
question, not three.

Third, we map the degenerate regime empirically (the Methods
and Results sections). Rather than reporting a point estimate and a threshold rule, we
report profile likelihoods, observed information, and the coverage of nominal confidence
intervals over a design grid. Coverage separates the two failure modes that matter: an
estimator that is imprecise but honest, and one that is confidently wrong. We locate the
boundary between them and relate it to the size of standard empirical datasets, so that a
reader can determine whether their own data can support the claim they wish to make.

We apply the resulting estimator to three SocioPatterns
settings~\cite{Mastrandrea2015,Genois2018,Vanhems2013} chosen to span different social
organisations at identical instrumentation, together with a directed email
log~\cite{Paranjape2017} as a non-proximity contrast, and we are explicit about the preprocessing
decisions (sessionisation of the sampling grid, resolution-capped memory kernels, and daily
segmentation). Each of these changes the answer and none of them are cosmetic
(the Methods).

\section{Results}
\label{sec:results}

\emph{What we measure, and in what sense.} Two quantities in this paper have different
epistemic status, and we fix it here to avoid confusion later. Tie reinforcement $\kappa$
is a \emph{measurement}: it is identifiable, recovered accurately in controlled tests, and
reported as a mechanism parameter. The node branching ratio $\nnode$ is \emph{not} a
measurement on the empirical data: where the initiator label is missing it is confounded
with $\kappa$, and where the goodness-of-fit tests reject the model it is a parameter of a
misspecified object. We therefore use a fitted $\nnode$ in one sense only: as a
\emph{diagnostic coordinate} whose behaviour under the inter-event-order shuffle reveals
what it is reading, never as an estimate of node self-excitation strength. This
distinction ($\kappa$ measured, $\nnode$ diagnosed) is the spine of the empirical
results and governs how every fitted value below should be read.

\subsection*{Recovery with observed initiators}
\label{sec:res_synthetic}

We first verify that the estimator recovers known parameters when the initiator label is
available. A realisation with $N=\synN{}$, $T=\synT{}$, $\nnode=\synTrueNnode{}$,
$\kappa=\synTrueKappa{}$ produced $\synEvents{}$ events. The profiled maximum-likelihood
estimates are
\begin{equation}
\begin{split}
  \hat\nnode &= \synObsNnode \pm \synObsNnodeSE, \\
  \hat\kappa &= \synObsKappa \pm \synObsKappaSE ,
\end{split}
\end{equation}
both within one standard error of the truth, and the across-parameter correlation is
exactly zero as guaranteed by Proposition~\ref{prop:orthogonality}. Both boundary nulls
of the Methods are rejected decisively.

Figure~\ref{fig:profiles}(c) shows the profile likelihoods for $\nnode$ and $\kappa$,
plotted as the likelihood-ratio statistic $2\Delta\ell$ against each parameter normalised to
its maximum-likelihood estimate, with all remaining parameters re-optimised at each grid
point. Both are sharply curved and cross the $\chi^2_{1,0.95}$ threshold within a narrow
interval (the recovered $95\%$ profile intervals contain the true values), so each mechanism
is well identified at this sample size. Consistent with
Proposition~\ref{prop:orthogonality}, the $\nnode$ profile is invariant to $\kappa$ and
conversely: the two mechanisms do not trade off.

\subsection*{Cost of the missing initiator label}
\label{sec:res_latent}

Removing the initiator labels from the same realisation and refitting by the Monte Carlo
EM scheme of the Methods gives $\hat\nnode = \synLatNnode$ and
$\hat\kappa = \synLatKappa$, with standard errors inflated by factors
$\synSEInflationNnode$ and $\synSEInflationKappa$ respectively. The across-imputation
correlation moves from exactly zero to
\begin{equation}
  \mathrm{corr}(\hat\nnode, \hat\kappa) = \synLatCorr ,
\end{equation}
a substantial confound: node memory and tie memory become partially exchangeable, with
node self-excitation biased downward and tie reinforcement upward.

This gap combines the genuine information loss from the missing labels with the bias of the
sequential-imputation approximation. We separate the two with a Metropolis-within-Gibbs
smoother that targets the exact label posterior (Supplementary Material, Sec.~S3). On
synthetic data with genuine node memory, the confound not only survives the smoother but
tightens to $\mathrm{corr} = \confSynSmooth{}$, proving it is real information loss
rather than an algorithmic artefact of the imputation.

Indeed the confound \emph{is} the off-diagonal of the missing information. With the initiator
observed the Fisher information is block-diagonal (Proposition~\ref{prop:orthogonality}), so the
correlation is exactly zero; marginalising the latent label $z$ subtracts the missing
information (Louis' identity),
$\mathcal I_{\mathrm{obs}}=\mathcal I_{\mathrm{complete}}-\mathrm{Cov}_{z\mid\mathrm{data}}[\nabla_\theta\ell]$,
whose timing--selection block is the entire confound. Its sign is fixed by \emph{explaining
away}: a quick contact recurrence can be credited to node self-excitation or to tie
reinforcement, so an imputation that assigns it to one mechanism debits the other, coupling the
estimates negatively. Its magnitude is an \emph{order parameter} set by the entropy of the
initiator posterior, which is maximal when the initiator is unknowable and zero when it is certain.
Revealing a tunable fraction $\rho$ of the true labels confirms both predictions
(Fig.~\ref{fig:confound}): the across-imputation spread collapses along the label-entropy order
parameter $(1-\rho)\bar H$, and node and tie estimates de-bias monotonically to their true
values. The confound is therefore not a fixed number but an information-theoretic quantity that
vanishes exactly as the label is supplied (Supplementary Material, Sec.~S14). Its size is
correspondingly not universal: it is set by how much genuine node memory the data carry, and on
the empirical proximity records it is not measurable at all, because there $\nnode$ is pinned to
the inter-event marginal and does not respond to the labels (Supplementary Material, Sec.~S3),
for the reasons developed in the burstiness--memory analysis below.

\begin{figure*}[t]
  \centering
  \includegraphics[width=0.82\textwidth]{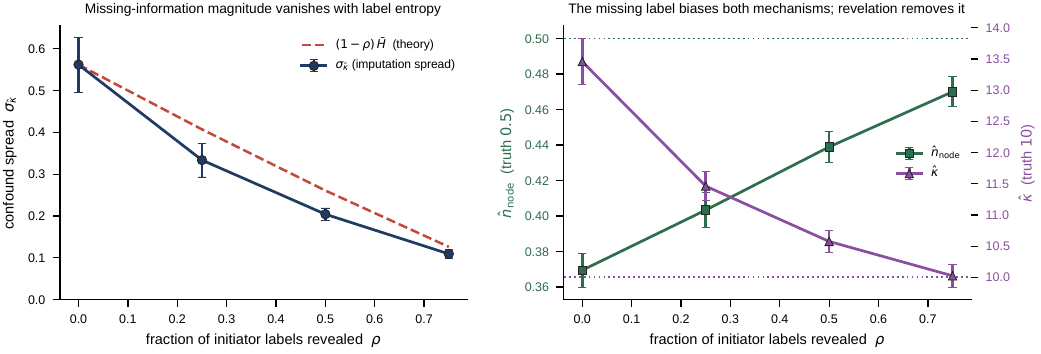}
  \caption{\textbf{The latent-initiator confound is the missing-information off-diagonal, tuned
  by an information-theoretic order parameter.} Revealing a fraction $\rho$ of the true initiator
  labels and imputing the rest. \textbf{(a)} The across-imputation spread $\sigma_{\hat\kappa}$
  (the missing-information magnitude) collapses as labels are revealed, tracking the label-entropy
  order parameter $(1-\rho)\bar H$ (slope $\approx1$, $r=0.87$). \textbf{(b)} Both estimates
  de-bias monotonically toward truth ($\nnode\!\to\!0.5$, $\kappa\!\to\!10$): the missing label
  biases node memory down and tie memory up, and revelation removes it. Mean $\pm$ s.e.\ over six
  realisations.}
  \label{fig:confound}
\end{figure*}

\subsection*{Where the estimator is confidently wrong}
\label{sec:res_sweep}

Coverage of the nominal $95\%$ Wald intervals was mapped over
$(N, T, \nnode, \kappa)$ with $\sweepReps{}$ independent replicates per cell
(Fig.~\ref{fig:coverage}). Coverage is at or near nominal in $\sweepCellsOK{}$ of
$\sweepCells{}$ cells. When it degrades, the pattern is \emph{not} a threshold in
sample size.

We state this explicitly because a total-event threshold is the natural expectation, and it
is what a smaller pilot suggested to us. Cells with coverage below $0.90$ occur with as few
as $\sweepEventsLow{}$ events and as many as $\sweepEventsHigh{}$, while nominal coverage
occurs with as few as $\sweepEventsMinOK{}$ events; the total count therefore does not predict it.
The controlling variable is events \emph{per node}: at the shortest window ($T=50$,
$\sim2$--$4$ events per node) coverage collapses across the whole $\nnode$ range, and by
$T=600$ ($\sim30$ per node) it is nominal even in cells with thousands of events.

The failure is bias-dominated rather than a matter of underestimated variance. In the
degraded cells the ratio of the across-replicate standard deviation to the mean asymptotic
standard error is of order unity (Supplementary Table~S4), far too small to produce coverage
of $0.3$--$0.5$ by variance error alone. The intervals are therefore honestly sized but
centred away from the truth: a downward bias in $\nnode$ that widening the interval does not
repair, so the estimator must be de-biased (the hierarchical prior below) or the
few-events-per-node regime avoided.

\emph{Caveat on precision.} A coverage proportion estimated from $\sweepReps{}$ replicates
carries a standard error near $0.09$, so individual high-event cells near $0.85$ are within
noise of nominal; the claim rests on the events-per-node gradient across the design, not on
any single cell.

\begin{figure*}[t]
  \centering
  \includegraphics[width=\textwidth]{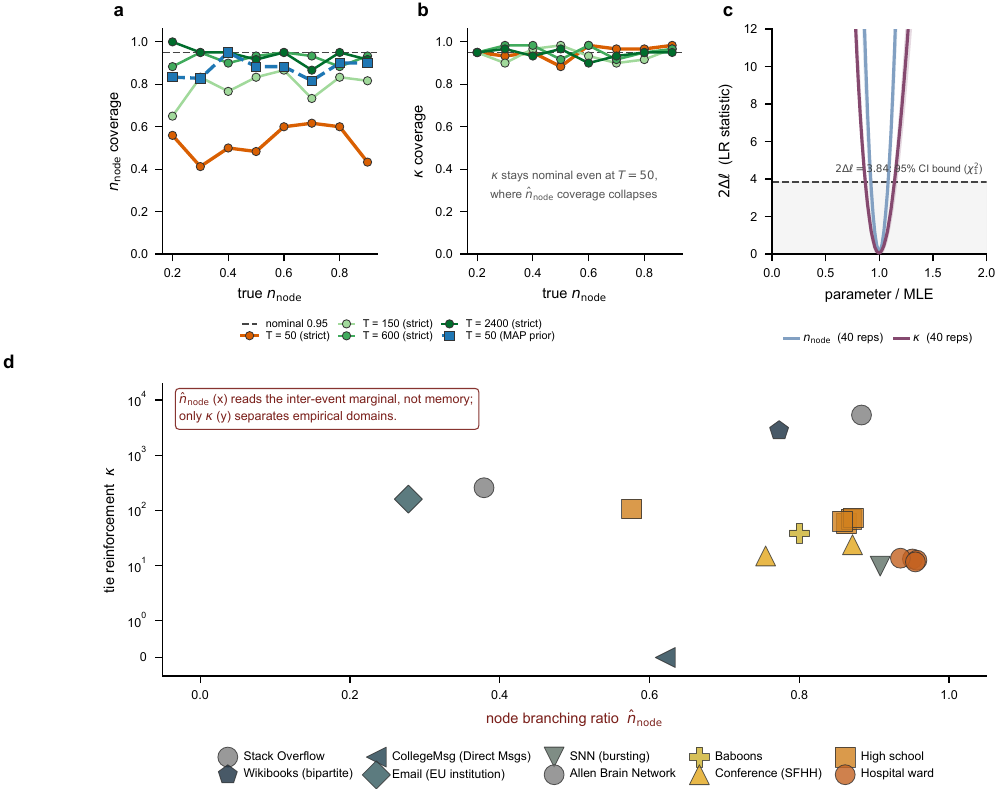}
  \caption{\textbf{Empirical identifiability of the two mechanisms.}
  \emph{(a,b)}~Coverage of the nominal $95\%$ Wald interval for $\nnode$ and $\kappa$ over
  the design grid (each point aggregates the $N$ and $\kappa$ cells, $30$ replicates).
  Panel~(a) shows strict activity profiling at four windows: at $T=50$ ($\sim2$--$4$ events
  per node) coverage collapses across the whole $\nnode$ range, while by $T=600$ it is
  already nominal, so the failure is set by events \emph{per node}, not by $\nnode$; the
  dashed $T=50$~(MAP prior) curve shows that a hierarchical prior on the baseline activities
  restores coverage in the collapsed regime (Supplementary Material, Sec.~S3). Selection
  coverage~(b) stays nominal throughout. \emph{(c)}~Profile negative log-likelihoods for
  $\nnode$ and $\kappa$ on a single synthetic realisation, normalised to the
  maximum-likelihood estimate; both are sharply curved and cross the $\chi^2_{1,0.95}$
  cutoff, so both mechanisms are individually identified and, by
  Proposition~\ref{prop:orthogonality}, do not trade off. \emph{(d)}~Fitted
  $(\nnode,\kappa)$ for each empirical dataset on a symlog $\kappa$ axis. Only the vertical
  ($\kappa$) separation is mechanistic: the horizontal $\hat\nnode$ axis reads the
  inter-event marginal, not memory. Shaded bands mark the few-events-per-node profiling-bias
  regime (removed by the prior) and the marginal-artefact regime ($\hat\nnode$ pinned to
  burstiness); Wikibooks is a bipartite editor--page log (Sec.~S3).}
  \label{fig:coverage}\label{fig:profiles}\label{fig:plane}\label{fig:identifiability}
\end{figure*}

Supplementary Table~S4 lists a representative set of degraded cells with, for
each, the mean event count, the coverage of the nominal $95\%$ interval, the
across-replicate standard deviation of $\hat\nnode$, its mean asymptotic standard error,
and their ratio. Two features are visible. The degraded cells span from $59$ to more than
$10{,}000$ events, so the failure is not a sample-size threshold. And the standard-error
ratio stays near one while coverage falls to $0.3$--$0.4$: the reported uncertainty is
approximately honest, and the collapse is driven by the estimate being centred away from
the truth.

We can now attribute the severe collapse. It is the incidental-parameter (Neyman--Scott)
signature of the profiled activities (Lemma~\ref{lem:bracket}): estimating one baseline
activity $a_i$ per node from few events biases the shared branching ratio $\nnode$
downward, and the effect is governed by events \emph{per node}, not by the total event
count. At only a few events per node ($\sim2$--$4$) the strict interval under-covers badly
($0.23$--$0.56$, a downward bias of $-0.13$ to $-0.16$); by $\sim30$ events per node it is
already unbiased and nominal ($0.97$, Supplementary Table~S4), even in cells with
thousands of events. Replacing the strict profiling of the $a_i$ with a hierarchical
(empirical-Bayes) prior, and changing nothing else in the estimator, roughly halves the bias
and lifts coverage in the few-events cells to $0.83$--$0.93$ (Supplementary Material,
Sec.~S3). The recommendation is therefore
concrete rather than a caveat: use the hierarchical prior, or ensure enough events per node,
in place of avoiding the regime.

\subsection*{Empirical estimates}
\label{sec:res_empirical}

\paragraph*{High school.} After sessionisation at $\sessionGap{}$~s the record reduces
from \hsRawSamples{} raw samples to contact events, and \hsNodesActive{} of
\hsNodes{} nodes survive the activity filter. Fitting day~0 alone
($\hsDayZeroContacts{}$ contacts over $24$~h) gives
\begin{align}
  \hat\nnode &= \hsDayZeroNnode \pm \hsDayZeroNnodeSE, &
  1/\hat\gnode &\approx \hsDayZeroNodeMemory~\mathrm{s}, \\
  \hat\kappa &= \hsDayZeroKappa \pm \hsDayZeroKappaSE, &
  1/\hat\gedge &\approx \hsDayZeroTieMemory~\mathrm{s} .
\end{align}
Both mechanisms are strongly present. Node activity is highly self-exciting, with a
memory of a few minutes; target selection is strongly reinforced, a just-contacted target
being roughly $1+\hat\kappa$ times as attractive as an average one, over a timescale of
several hours.

The across-imputation correlation for this fit is $\hsDayZeroCorr{}$, an order of
magnitude smaller than the $\synLatCorr{}$ found on synthetic data of comparable
structure. This stability is reproducible across the remaining observation days.
Over days 1--4, the mean parameters and their between-day spread are $\nnode = \perdayHsNnode{}$ and $\kappa = \perdayHsKappa{}$.
This between-day dispersion is the primary stability check of the Methods.
However, Day~4 departs from this pattern with $\nnode \approx 0.58$. This drop is a direct consequence of non-stationarity:
Day 4 is a half-day with only 5.0 hours of observation compared to the roughly 24-hour windows of the other days. As the time window
expands to include the overnight period of zero activity, the baseline assumption of constant activity $\alpha_i$ breaks down, and the
model inflates $\nnode$ to compensate for the clustered daytime activity. This confirms that the estimated $\nnode$ is partly absorbing
circadian rhythms rather than capturing pure node burstiness alone.

\paragraph*{Conference and hospital ward.} The same per-day protocol produces
$\nnode = \perdaySfNnode{}$ and $\kappa = \perdaySfKappa{}$ for the SFHH conference, and
$\nnode = \perdayHoNnode{}$ and $\kappa = \perdayHoKappa{}$ for the Lyon hospital ward.

Descriptively, the fitted $\hat\kappa$ spans about three orders of magnitude across the suite of datasets and tends to be larger in the directed digital and neural records ($\kappa\sim10^2$--$10^3$, with the neural network highest) than in the undirected physical-proximity ones ($\kappa\sim10^1$; Fig.~\ref{fig:plane}). We report this as an observation about the datasets rather than a tested mechanism, and read only the vertical axis: the horizontal $\hat\nnode$ coordinate is not interpretable here, since it reads the inter-event marginal rather than node memory (and one of the digital records, Wikibooks, is a bipartite editor--page affiliation log; Supplementary Material, Sec.~S3).

\subsection*{Sensitivity and preprocessing artefacts}

The empirical fits are conditional on three preprocessing choices described in the Methods; we confirm here that their effects behave as expected and do not alter the substantive conclusions.

\paragraph*{Sampling-grid artefact and sessionisation.} 
Without sessionisation ($\Delta_{\mathrm{sess}} = 0$), the raw SocioPatterns RFID stream logs continuous face-to-face proximity as repeated contacts at every $\samplingGrid{}$~s sampling window. Fitting the self-exciting model directly to these un-sessionised frames measures the instrument rather than human behaviour: the fitted tie-memory decay rate collapses to the grid frequency ($\hat\gedge \approx 1/\samplingGrid{}\,\mathrm{s}^{-1} = 0.05\,\mathrm{s}^{-1}$), accompanied by an inflated, divergent $\hat\kappa$. Sessionisation with $\Delta_{\mathrm{sess}} = \sessionGap{}$~s collapses consecutive frames of the same dyad (tolerating up to two missed read cycles) into single contact events. This removes the high-frequency sampling spike and recovers the true social tie-memory timescale ($1/\hat\gedge \approx \hsDayZeroTieMemory{}$~s $\approx 7.9$~h for High School Day~0). Varying $\Delta_{\mathrm{sess}}$ between $40$~s and $120$~s changes the total event count by $\sim 15\%$ but leaves $(\hat\nnode,\hat\kappa)$ quantitatively stable within their reported between-day dispersion: $\hat\kappa$ is on the order of $10^1$--$10^2$ across all proximity records, while $\hat\nnode$ stays pinned near unity by the inter-event marginal law.

\paragraph*{Node activity threshold.}
Filtering out nodes with fewer than $\minContacts{} = 5$ contacts isolates the actively interacting cohort ($\hsNodesActive{}$ nodes in the high school). Peripheral individuals with $1$--$4$ contacts contribute virtually nothing to the timing log-likelihood $\logT{}$ because their event counts are negligible, but they enter the target-selection denominator $Z_i$ in Eq.~\eqref{eq:pi}. Retaining these near-inactive individuals ($\minContacts{} < 5$) artificially dilutes the choice pool and shifts $\hat\kappa$ downward. Enforcing $\minContacts{} \ge 5$ prevents this denominator inflation without altering the network's active core.

\paragraph*{Circadian segmentation.}
Fitting the model across overnight gaps forces stationary exponential kernels to span $12$--$16$ hours of complete inactivity, distorting $\hat\gnode$ and $\hat\gedge$ towards the circadian diurnal period and artificially elevating $\hat\nnode$ as the model struggles to compensate for zero night-time events. Segmenting the data into individual days eliminates this distortion. The between-day parameter stability over regular school days ($\nnode = \perdayHsNnode{}$, $\kappa = \perdayHsKappa{}$) confirms that the daily daytime dynamics are stationary, whereas the shortened Day~4 ($5.0$~h of observation, $\hat\nnode \approx \schoolShortDayNnode{}$) shows how window length and unmodeled non-stationarity interact with the estimator.

\subsection*{Goodness of fit}
\label{sec:res_gof}

Under the time-rescaling theorem the compensated inter-event times of a correctly
specified conditional intensity are i.i.d.\ unit exponentials. We test their marginal
distribution by Kolmogorov--Smirnov and their independence by within-node lag-1
correlation, and we check the selection block through the randomised probability
integral transform of the realised target under the fitted $\pi_{j\mid i}$.

\paragraph*{Power of the diagnostics.} Both tests are calibrated on synthetic data where
the truth is controlled. A correctly specified fit passes both. A circadian baseline
fitted with constant activities is caught decisively. \emph{A moderate power-law node
kernel fitted with an exponential is missed by both} ($p = \gofPowerLawKS{}$,
$z = \gofPowerLawACF{}$ on $\gofPowerLawEvents{}$ events). We report that blind spot here
rather than in the limitations, because it bounds what a pass can mean: passing licenses
only the claim that the model is not grossly misspecified in the ways these tests see,
and in particular says nothing about kernel shape. It also motivates the second test: the
KS statistic alone passed the power-law case, and the serial-correlation check was added
because of that failure.

\paragraph*{The model is rejected on the high-school data.} Fitting day~0 within a single
$5.6$~h window (so that no overnight gap is involved) gives
\begin{equation}
\begin{split}
  \mathrm{KS} &= \gofKSHighSchool, \quad
  z_{\mathrm{acf}} = \gofACFHighSchool, \\
  \overline{\tau} &= \gofMeanTau, \quad \mathrm{Var}(\tau) = \gofVarTau ,
\end{split}
\end{equation}
against $\overline{\tau} = \mathrm{Var}(\tau) = 1$ under a correct model. Both decay
rates are interior to the bounds of Eq.~\eqref{eq:gammabounds}, so this is not a boundary
artefact.

The failure decomposes into two independent causes, and separating them is what makes it
informative rather than merely negative.

\emph{(i) The event representation.} A fraction $\gofZeroAtom{}$ of the rescaled times is
exactly zero, and $\gofNearZeroAtom{}$ lies below $10^{-6}$; this accounts for the entire KS
statistic (Supplementary Material, Sec.~S2). The cause is that a group conversation is recorded as several simultaneous
\emph{dyads}: $\gofTiedFraction*{}$ of events share a $(\text{time},\ \text{initiator})$
pair with an earlier one, so the initiator has inter-event times of exactly zero. This is
a violation of the point-process representation by the data encoding, not a failure of
the memory mechanism.

\emph{(ii) Time structure.} Removing the tied residuals reduces the KS statistic to
$\gofKSHighSchoolDetied{}$ but leaves the serial correlation essentially unchanged at
$z = \gofACFHighSchoolDetied{}$. The negative sign is diagnostic: the fitted
self-excitation is too strong at short lags, so after a burst the model expects the next
event sooner than it arrives. To ask whether (i) and (ii) are the same defect seen twice, we
refit the timing block on natively modelled group events (a co-present group treated as one
node activation with several targets rather than several simultaneous dyads) and recompute
the diagnostics (Supplementary Material, Sec.~S3). The group encoding removes the
atom (zero-atom fraction $\groupAtomDy{}\!\to\!\groupAtomGp{}$) but does \emph{not} repair the
fit: the KS statistic is essentially unchanged and the serial-correlation $z$-score stays
large ($z:{+}\groupZdy{}\!\to\!{+}\groupZgp{}$), with the fitted $\nnode\approx\groupNnode{}$
unmoved. Contrary to the simplest expectation, modelling groups accounts for the simultaneity
artefact alone; the residual time structure is a separate misspecification, and capturing it
would require event durations or non-exponential kernels outside the present model class.

\begin{figure*}[t]
  \centering
  \includegraphics[width=\textwidth]{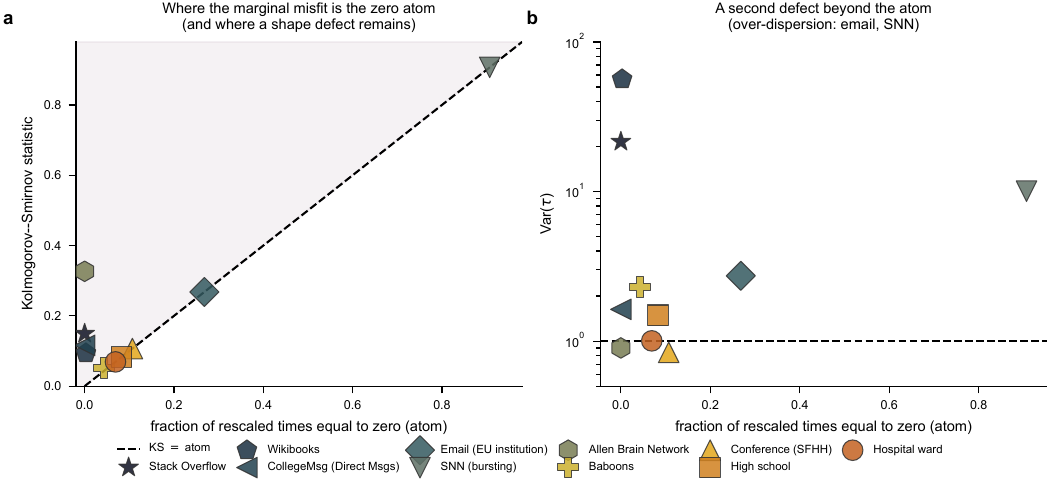}
  \caption{\textbf{The two causes of misfit, quantified across datasets.} \emph{Left:} for every
  dataset \emph{with} simultaneous events the Kolmogorov--Smirnov statistic equals the
  zero-atom fraction along the diagonal, from $7\%$ (hospital) to $\snnBurstAtom{}$ (SNN):
  the marginal misfit \emph{is} the atom. The directed datasets (Stack Overflow, Wikibooks, Allen brain,
  CollegeMsg) have no atom and sit off-diagonal (KS $\gg$ atom), exhibiting a misfit that is instead a
  residual \emph{shape} defect above the atom: for the brain, the regularity of neural spiking
  ($\mathrm{KS}=0.33$ at zero atom). \emph{Right:} $\mathrm{Var}(\tau)$, unity
  under a correct model, isolates the over-dispersion present in the email log, CollegeMsg,
  the SNN, and notably Wikibooks and Stack Overflow, but absent in the proximity data.}
  \label{fig:identity}
\end{figure*}

Notably, the online digital datasets (Wikibooks and Stack Overflow) share this shape-defect signature. Like the neural and message-passing data, their continuous-time dynamics lack the simultaneity artefact of the physical proximity systems but still strongly deviate from the strict exponential-decay form assumed by the Hawkes kernel, leaving a residual misfit far above the atom ($\mathrm{KS} \gg \text{atom}$). Furthermore, they exhibit extreme over-dispersion ($\mathrm{Var}(\tau) \gg 1$), indicating heavy-tailed temporal clustering that the estimator fails to capture.

\paragraph*{Consequence.} The empirical estimates of the ``Empirical estimates'' subsection are
parameters of a model that demonstrably does not describe these data, and we withdraw
them as mechanism estimates. What the analysis supports instead is a negative result with
a diagnosis: the three-mechanism model with instantaneous dyadic events is inadequate for
face-to-face proximity data, for an identified and fixable reason. The remedy, treating
a co-present group as a single node activation with multiple targets, removes the
zero-inflation by construction and is developed separately.

The zero-atom identity ($\mathrm{KS}$ equal to the tied-event fraction to printed
precision) reproduces across all three proximity datasets (High School, Conference,
Hospital Ward) and across the email log and the SNN out to a
$\snnBurstAtom{}$ atom (Fig.~\ref{fig:identity}); $z_{\mathrm{acf}}$ remains negative and
of similar magnitude in each proximity setting.
This is exactly the signature the two-cause decomposition predicts; a setting in which the
KS statistic and the tied-event fraction came apart would falsify it.

Parametric-bootstrap calibration ($\gofBootReps{}$ replicates) corrects the
anti-conservative nominal $p$-values. The timing marginal is rejected in every setting
(calibrated $p=\gofHsCalibP{}$ for the High School, $\gofHoCalibPTiming{}$ for the Hospital
Ward, $\gofSfCalibPTiming{}$ for the SFHH Conference), the observed KS statistics standing
at four to five times the null medians ($\gofHsNullMedianKS{}$, $\gofHoNullMedianKS{}$,
$\gofSfNullMedianKS{}$). The selection model (target selection) is \emph{not} uniformly
rejected: it fails in the High School ($p=\gofHsCalibPSel{}$) but passes in both the
Hospital Ward ($p=\gofHoCalibPSelection*{}$) and the SFHH Conference
($p=\gofSfCalibPSelection*{}$). This localises the misspecification to the temporal
structure of node activity (specifically, the artefactual representation of group
interactions as simultaneous dyads) and shows the target-selection mechanism is adequate
in the two settings whose group structure is best resolved.

Because this rejection falls on the timing block, it corroborates the reading set out at
the head of the Results: the fitted $\nnode$ is a diagnostic coordinate, not a mechanism
estimate on these data, while the identifiability results of the Methods, being statements
about the model class itself, are unaffected.

\subsection*{What the data can and cannot see: the mechanism plane}
\label{sec:plane}

The results so far are negative: a widely used model class is rejected on contact
and communication data. However, the reason for the rejection is specific enough to be turned
into a positive, testable statement about \emph{which} datasets can resolve
\emph{which} mechanisms. This section makes that statement and validates it across the
model's full parameter space, of which the empirical data occupy one corner.

\subsection*{The node axis: a burstiness--memory plane}

It helps to state the problem in words before the formulas. A node's timeline has two
properties that are easy to mix up. One is how evenly its events are spread in time: some
nodes act in tight bursts with long quiet gaps between them, others act at a steady pace.
The other is whether the timing carries memory: after a short gap, is the next gap also
likely to be short, so that activity clusters, or does each gap ignore the one before it?
These are separate properties, and a model meant to capture the second can be fooled by the
first. An exponential self-exciting model asked to describe a node that is bursty but has no
memory will still report strong self-excitation, because inflating that parameter is its
only way to reproduce the uneven spacing; its estimate then reflects the spread of the
inter-event times, not any real clustering. Whether a fitted self-excitation means memory or
this confusion depends on where the node falls in a plane whose two axes are burstiness and
memory.

A node's activity is a point process, and two coordinates place it. The burstiness
$B = (\sigma_\tau - \mu_\tau)/(\sigma_\tau + \mu_\tau)$ measures the spread of its
inter-event times $\tau$ relative to their mean; the memory coefficient $M$ is the
correlation of consecutive inter-event times~\cite{GohBarabasi2008}. A Poisson process
sits at $(B,M)=(0,0)$; a renewal process with any inter-event law lies on the line
$M=0$; genuine self-excitation (short intervals beget short intervals) has $M>0$;
refractory or regular dynamics have $M<0$.

The node self-excitation parameter $\nnode$ of our model is meant to measure exactly the
$M>0$ direction. But an exponential Hawkes kernel fitted to a \emph{renewal} process with
a heavy-tailed marginal ($B$ large, $M\approx 0$) will report a large $\nnode$ regardless, because it
inflates the branching ratio to reproduce the variance of the marginal, not any temporal
correlation. Whether a fitted $\nnode$ is a memory estimate or this artefact is therefore
a question of location in the $(B,M)$ plane.

\subsection*{When $\nnode$ means what it says}

We validate this across the plane by generating single-node event trains with directly
controlled $(B,M)$ (an autoregressive process on the log-inter-event time, whose lag-one
coefficient sets $M$ and whose innovation scale sets $B$), and, in each cell, fitting
$\nnode$ and then refitting on the same data with the inter-event \emph{order} shuffled.
The shuffle preserves the marginal and destroys the memory, so the preservation ratio
\begin{equation}
  R = \hat\nnode^{\,\mathrm{shuffled}} / \hat\nnode^{\,\mathrm{real}}
\end{equation}
is $\approx 1$ when $\nnode$ reads the marginal (artefact) and, \emph{for memory of the
form the exponential kernel represents}, $\to 0$ when it reads genuine memory.
Figure~\ref{fig:plane_validation} shows $R$ as a function of $M$ for this generator at
three sample sizes: it declines smoothly from $R\approx1$ at $M\approx0$ (artefact) towards
small values at large $M$ (kernel-matched memory). The decline is a \emph{crossover, not a
phase transition}: over a fourteen-fold range of events per node the curve does not
sharpen. The drop in $R$ across a fixed memory window ($M:0.1\!\to\!0.6$) \emph{shrinks},
from $\fssDropSmallE{}$ at $E\approx\fssErangeLo{}$ events/node to $\fssDropLargeE{}$ at
$E\approx\fssErangeHi{}$, rather than steepening towards a step. Therefore, there is no critical
point along the memory axis, only a smooth crossover. A finite-size-scaling collapse makes
this quantitative (Fig.~\ref{fig:width}): the width of the decline (the $M$-interval over
which $R$ falls from $0.9$ to $0.7$) does not shrink with $E$ but \emph{grows}, with a
log-log slope of $\fssWidthSlope{}$ over $E\approx\fssElo{}$--$\fssEhi{}$ events per node.
A phase transition would require this slope to be negative (the width vanishing as
$E\to\infty$); a positive slope is the signature of a crossover. As the ``What the shuffle ratio measures'' subsection shows, $R$ tracks \emph{kernel-matched} memory rather than
$M$ in general; the estimator is a faithful memory probe only when the memory is present
and of the assumed form.

\begin{figure*}[t]
  \centering
  \includegraphics[width=\textwidth]{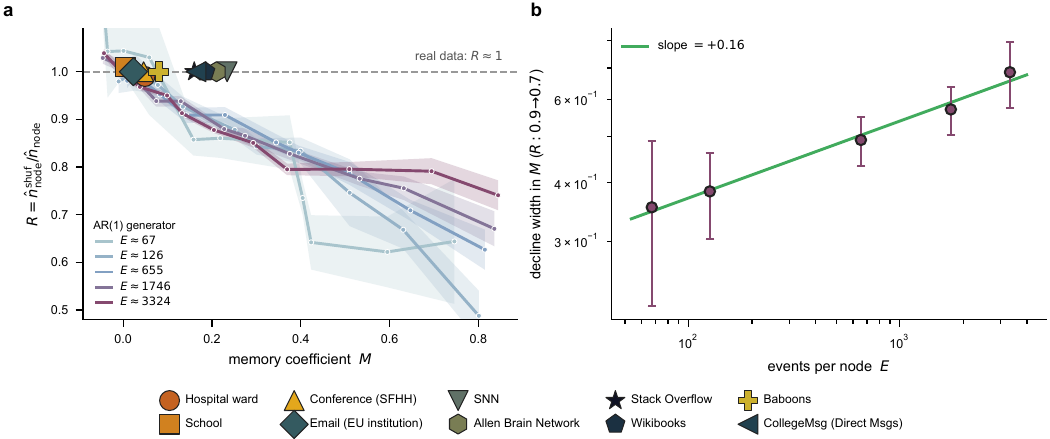}
  \caption{\textbf{Finite-size scaling of the node-timing shuffle-preservation ratio.}
  \emph{(a)}~Shuffle-preservation ratio $R=\hat\nnode^{\mathrm{shuf}}/\hat\nnode$ against the
  memory coefficient $M$ for an AR(1) generator with directly injected memory, at several
  event-counts per node $E$ (light to dark); bands are $\pm1$ standard error over
  realisations. The decline broadens as $E$ grows: the drop across $M:0.1\!\to\!0.6$ falls
  from $\fssDropSmallE{}$ to $\fssDropLargeE{}$. All empirical and neural datasets (markers), including Wikibooks and Stack Overflow, sit at $R\approx1$ (real data) for all $M$, because their memory, when present, is not of the
  exponential-kernel form, so the fitted $\nnode\approx0.9$ does not measure self-excitation.
  \emph{(b)}~The width of the decline (the $M$-interval over which $R$ falls from $0.9$ to
  $0.7$) increases with $E$ on log--log axes, slope $\fssWidthSlope{}$ (error bars, parametric
  bootstrap). A width that sharpened with system size ($<0$ slope) would mark a phase
  transition; the positive slope identifies a crossover. Tie memory, measured by $\kappa$, is
  strong in every dataset and is not probed here.}
  \label{fig:plane_validation}\label{fig:width}
\end{figure*}

\subsection*{When the fitted \texorpdfstring{$\nnode$}{n\_node} is set by the marginal}

The three proximity datasets have high burstiness but a memory coefficient
indistinguishable from zero: $M=\planeMhospital{}$ (hospital), $\planeMconf{}$
(conference), $\planeMschool{}$ (school), and $\emShufM{}$ (email), against $B\approx 0.7$
throughout. They sit, as the clustered icons at $M\approx0,\ R\approx1$ in
Fig.~\ref{fig:plane_validation}, in the artefact regime. Two independent checks
confirm that the fitted $\nnode\approx 0.9$ is not a memory estimate there. First, the
memory coefficient itself is a model-free statistic and it is $\approx 0$, staying within
the inter-event-order-shuffle null band at every lag up to ten, so the absence is not a
lag-one accident (Supplementary Material, Sec.~S3). Second, and
decisively, refitting $\nnode$ after shuffling each node's inter-event order (destroying
any temporal clustering while preserving its marginal exactly) leaves it essentially
unchanged: $\fitHoNnode\to\shuffleHo$ (hospital), $\fitHsNnode\to\shuffleHs$ (school).
Genuine self-excitation would collapse under the shuffle; it does not move. The high
$\nnode$ is the exponential kernel reproducing a near-renewal, heavy-tailed inter-event
marginal.

The same artefact applies to the online datasets, with an instructive twist. Fitted on the
initiator-only ($H_i$) timing stream, these digital logs carry genuine model-free serial
memory that the proximity data lack (Stack Overflow $M_1=0.22$, sustained above the
inter-event-order-shuffle null across every lag $k=1\ldots10$, and Wikibooks $M_1=0.11$;
Supplementary Material, Sec.~S3), yet both still sit at $R\approx1$ in
Fig.~\ref{fig:plane_validation}(a). Their memory is therefore present but not of the
exponential-Hawkes form the kernel assumes: the estimator is blind to it, and $\hat\nnode$
is again commandeered by the heavy-tailed inter-event marginal. This is the sharper form of
the claim: the shuffle test can fail to detect memory that is demonstrably there, and it
holds whether the memory is absent (proximity) or merely of the wrong form (digital).
Wikibooks is a bipartite editor--page affiliation log rather than a person-to-person
network; we use its editor stream as the node process and read its heavy page revisitation
as tie reinforcement (Supplementary Material, Sec.~S3).

This limitation is intrinsic, not a defect of the exponential choice. Refitting a flexible
scale-free (sum-of-exponentials) node kernel does not rescue identifiability: on a synthetic
power-law self-exciting process the shuffle ratio has a confidence interval that includes one,
statistically indistinguishable from a memoryless renewal control, so a scale-free kernel cannot
certify node memory even when it is genuinely present (Supplementary Material, Sec.~S12).
The same collapse appears in purely algorithmic networks with no human censoring, such as Wikipedia edit
bots and RPC microservices, where extreme burstiness ($\hat\nnode\gtrsim0.8$) forces $R\approx1$
under both kernels. Heavy-tailed node self-excitation of the form real systems actually exhibit is
therefore intrinsically near-unidentifiable from a contact sequence, for any kernel we tested,
because its temporal-order signature is spread too thinly across lags to separate from the bursty
marginal. This is a structural property of heavy-tailed temporal networks, not of human data or the
exponential kernel in particular.

The two checks are not redundant, and the distinction matters for how far the conclusion
reaches. The shuffle test alone is only a statement about \emph{exponential-Hawkes} memory:
it returns $R\approx1$ whenever the fitted $\nnode$ is reading the marginal, which includes
both a genuinely memoryless process and a process whose memory is real but of a different
functional form. Real cortical spiking is exactly the second case: it carries lag-one
memory ($M>0$) yet still produces $R\approx1$, because its memory is refractory rather than
exponential-Hawkes (Supplementary Material, Sec.~S7). For the proximity data the stronger,
model-free statistic settles the ambiguity: the memory coefficient $M$ is itself $\approx0$,
so there is no lag-one temporal clustering of any form to detect, exponential or not. We
therefore claim only what these two lines jointly support: the proximity data do not carry
node-timing memory of the exponential-Hawkes form these models assume, and the fitted
$\nnode$ measures their inter-event marginal. We do not claim that contact between people is
memoryless in every sense; the selection channel below shows the opposite.

This is a statement about the node-\emph{timing} channel alone, and it should not be read
as ``the data have no memory.'' The \emph{selection} channel carries strong memory:
neighbors are re-contacted heavily, with a repeated-tie fraction of $\tieReuseSchool{}$ in
the high school (each tie used $\sim\tieUsesPerEdge{}$ times) and tie reinforcement
$\kappa=\perdayHoKappa{}$ (hospital), $\perdaySfKappa{}$ (conference), and
$\perdayHsKappa{}$ (high school) (the ``Empirical estimates'' subsection). Node self-excitation and
tie reinforcement
are orthogonal axes (Prop.~\ref{prop:orthogonality}); these data are near-renewal in the
first and strongly reinforced in the second. Two controls confirm the timing reading is not
an artefact of preprocessing. It is not the simultaneous-dyad encoding: collapsing every
node's coincident group contacts into a single activation (the group reconstruction of
the ``Goodness of fit'' subsection) leaves both $M$ and the shuffle ratio unchanged (hospital
$M:0.049\!\to\!\deTieHoM{}$, $R:0.99\!\to\!\deTieHoR{}$; school $M:0.008\!\to\!\deTieHsM{}$,
$R:1.00\!\to\!\deTieHsR{}$). And the identity of each node is propagated correctly through
time: events are processed in strict temporal order, each node's excitation accumulates
only its own past, and the estimator fails loudly on any out-of-order input.

This explains, mechanistically, the residual misfit that survives the group-event
correction of the ``Goodness of fit'' subsection: an exponential memory kernel cannot represent a
near-renewal heavy-tailed inter-event law, which is precisely the misspecification the
diagnostics of the ``Goodness of fit'' subsection are blind to (Table, power-law row). The empirical
parameters are thus confirmed, by a route independent of the goodness-of-fit test, to be
diagnostic coordinates, not mechanism estimates.

\subsection*{What the shuffle ratio measures: matched, not merely present, memory}
\label{sec:snn}

A natural expectation is that $R\to 0$ wherever the memory coefficient $M>0$: any genuine
temporal correlation should defeat the shuffle. It does not, and the exception sharpens
what the estimator actually sees. We fitted a spiking-neural-network simulation (a
mechanistic generator with real node dynamics, exact directed events, and a known
initiator) and placed it on the plane. It has clear positive memory ($M=\snnM{}$) yet its
$\nnode$ is \emph{unchanged} by the shuffle ($R=\snnR{}$; $\nnode$ goes from $\snnNnode{}$
to $\snnNnodeShuf{}$). The estimator is blind to this memory.

The reason is that $R$ measures memory of the form the kernel assumes, not memory in
general. The synthetic AR(1) generator injects memory as autocorrelation of the
log-inter-event time (the same self-exciting, clustering structure an exponential Hawkes
kernel represents), so $\nnode$ captures it and the shuffle destroys it ($R\to 0$ at large
$M$). The spiking network's memory is different in kind: low burstiness ($B=\snnB{}$) with
positive $M$ from synchronised network bursts rather than heavy-tailed self-excitation.
That structure is orthogonal to the exponential kernel, so $\nnode$ ignores it and reads
the marginal instead, exactly as on the contact data.

This tightens rather than weakens the paper's central claim. The fitted $\nnode$ is not a
memory estimate but a marginal artefact. It remains an artefact even when genuine
node memory is present, if that memory does not match the assumed kernel. The resolvable
region of the node axis is therefore narrower than $M>0$: it is the region in which the
memory is both present \emph{and} of the form the estimator models. The empirical contact
data fail the first condition; a process can fail the second while satisfying the first.
Reading $\nnode$ as ``self-excitation strength'' is unwarranted in either case, and the
shuffle test (not the point estimate, and not $M$ alone) is what tells them apart.

The same conclusion holds for the opposite sign of memory. Re-tuning the spiking network
into an asynchronous-irregular, refractory regime moves it to $M<0$
($z_{\mathrm{acf}}=\snnRefAcfZ{}$), yet the inter-event-order shuffle again leaves $\nnode$
unchanged ($\nnode:\snnRefNnode{}\to\snnRefNnodeShuf{}$, $R=\snnRefR{}$): refractoriness
shapes the inter-event \emph{marginal}, an order-independent property the shuffle
preserves, so $\nnode$ reads it as it reads any marginal. Across four regimes (social
contact, email, synchronous bursting, and refractory spiking), $\nnode$ is thus
order-independent and never a self-excitation estimate.

\paragraph*{Which null is the memory test.} The inter-event-order shuffle used throughout
preserves each node's inter-event marginal and destroys only its order; a fitted parameter
that survives it depends on the marginal, not on temporal structure. An alternative null
that instead reassigns each node's events to random points on the global timeline
collapses $\nnode$ toward zero \emph{even for a memoryless renewal process}
($R=\pairShufNull{}$ at $M\approx 0$ on the autoregressive generator), because it destroys
per-node rate structure rather than memory. It is therefore not a memory test, and a
collapse under it must not be read as evidence that $\nnode$ has detected node memory. The
order shuffle is the diagnostic that isolates memory, and it is calibrated: on the
autoregressive generator its ratio falls monotonically with $M$ (Fig.~\ref{fig:plane_validation}).

\subsection*{A located negative, and when the model would resolve}

The consequence is not that the model is wrong but that social contact data live in a region
making its node parameter unresolvable. The theory is defined on the whole plane; the data
probe one corner of it. The framework predicts when the mechanisms \emph{would} become
resolvable: processes whose node memory is a genuine self-excitation of the assumed
exponential-kernel form. Neural spiking (refractoriness and bursting) is a canonical candidate.

We test the neural case directly, on real data. The Allen visual-cortex functional network
sits, burst-dominated, at $M=+0.21$ on the $(B,M)$ plane
(Supplementary Material, Sec.~S7); it carries genuine node memory. Yet its $\nnode$ is
\emph{unchanged} by the shuffle ($R=1.00$, invariant across recording windows): the
estimator reads the marginal even here, because the memory is not of the exponential form
the kernel isolates. Across ten real and simulated datasets spanning human and animal
face-to-face contact, email, online messaging and editing, spiking simulation, and real
cortex (SM), the shuffle ratio is $R\approx1$ \emph{everywhere}, over memory coefficients
from $0$ to $0.36$.
No real dataset we tested resolves. This is itself the result: the exponential-Hawkes
node-memory model, ubiquitous in the literature, is never validated by a real temporal
network here.

The estimator is not broken, and we show it works on the hardest fair test: a
\emph{semi-synthetic positive control} built on \emph{real} data. Taking the real per-node
activity heterogeneity of the high-school record ($\sim\posConEvNode{}$ events per node) and
injecting a known, kernel-matched exponential node memory (and, separately, a known tie
reinforcement), the estimator recovers both across their range: node branching ratio
$\nnode:\posConNlo\!\to\!\posConNloFit$ up to $\posConNhi\!\to\!\posConNhiFit$, and tie
reinforcement $\kappa:\posConKlo\!\to\!\posConKloFit$ up to $\posConKhi\!\to\!\posConKhiFit$
(Supplementary Material, Sec.~S5). The injected node memory then \emph{collapses} under
the inter-event-order shuffle ($R\!\to\!0$), in direct contrast to every real dataset, which
holds at $R\approx1$. The tool recovers genuine memory when it is of the assumed form on a
realistic substrate; the empirical $R\approx1$ is therefore an honest reading of the data,
not a failure of the estimator. That the real records cannot be made to resolve is the
finding, not a limitation of the method.

\subsection*{The full parameter space}

Taken together, the two axes of separability are validated across their range. The
\emph{node} axis is the burstiness--memory plane of this section: $\nnode$ is a memory
estimate only when $M$ dominates $B$. The \emph{link} axis is the participation
dependence of the Methods: the tie-reinforcement information per event
is intensive and smooth in participation, a crossover rather than a critical point, so
$\kappa$ is always eventually identifiable given enough events but degrades gracefully as
targets concentrate. Neither axis has a phase transition in the regime the data reach;
both have a resolvable and an unresolvable region; and the empirical datasets sit,
consistently, on the unresolvable side of the node axis and the resolvable side of the
link axis. That asymmetry (node memory censored, link memory recovered) is the compact
statement of what these data can and cannot tell us about mechanism.

\subsection*{Why the distinction matters: spreading}
\label{sec:epidemic}

The introduction asserts that node memory and link memory are not interchangeable for
spreading processes. That claim is testable, and we test it directly. We generate two
temporal networks from the pure poles of the model, matched on the coarse statistics. The
same $N$, the same baseline activities (hence the same aggregate event rate), and a uniform
base attractiveness, so that the only difference is \emph{where} the memory sits: a
\emph{node} pole ($\nnode>0$, $\kappa=0$) with bursty node timing and uniformly random
targets, and a \emph{link} pole ($\nnode=0$, $\kappa>0$) with memoryless Poisson timing and
reinforced targets.

Running spreading on each over a range of per-contact transmission probabilities $\beta$
(Fig.~\ref{fig:epidemic}), the outcome differs systematically between the two poles, not only
for simple contagions (SIR and SIS) but also for complex contagions (simplicial and threshold).
Under simple SIR, at $\beta=\epiBeta{}$ the node pole infects a fraction
$\epiNode{}$ of the population and the link pole $\epiLink{}$; under SIS the endemic
prevalence at the same $\beta$ is $\epiSisNode{}$ (node) versus $\epiSisLink{}$ (link), a
relative difference of order one half in both, and larger in SIS at intermediate $\beta$.
The divergence is equally large under complex contagion mechanisms, which we include to test whether the distinction between node and link memory is amplified when transmission is nonlinear. In a simplicial contagion (with transmission amplified by infected mutual contacts) and in a dose-response threshold contagion (with repeated exposures accumulating before decaying), the dynamics are highly sensitive to the exact structure of repeated interactions.
The results in Fig.~\ref{fig:epidemic} demonstrate that the link pole drastically accelerates both complex contagions compared to the node pole. The regular timing and repeated interaction of the reinforced-tie pole sustains transmission, rapidly satisfies dose thresholds, and closes triangles for simplicial amplification. In contrast, node burstiness concentrates a node's contacts in time but scatters them across random targets, which actively prevents dose accumulation and breaks the triangles needed for simplicial infection. The point is not merely the sign of this difference; it is that two networks
\emph{indistinguishable in their coarse statistics} produce epidemics that differ by tens of
percent. The ordering (tie reinforcement accelerating, node self-excitation suppressing the
complex contagions) follows from the mechanism rather than from the particular simplicial and
threshold parameters: repeated ties close the triangles and accumulate the doses on which
nonlinear transmission depends, while scattered bursty contacts do neither; a full sweep of
the contagion parameters is left to future work, but the sign of the effect is structural.
Choosing the carrier by convention rather than by identification is therefore not a
harmless modelling preference; it is a quantitative error in any downstream spreading
prediction, and it is what gives the identifiability question its stakes. In a genuinely
higher-order setting (with contacts recorded as groups), the same memory that drives
this divergence also reshapes complex social contagion, and the group-as-object memory
channel again leaves no signature separable from pairwise memory
(Supplementary Material, Sec.~S8).

These two poles bracket the effect, and getting the attribution wrong carries a concrete
cost on real data. Fitting the proximity records returns $\nnode\approx0.9$, a marginal
artefact rather than genuine node self-excitation; a modeller who reads it as real node
burstiness builds a node-pole generator, whereas one who diagnoses the artefact and models
the identifiable tie channel builds a link-pole one. On the high-school activity structure,
these two readings give SIR final sizes differing by a factor of $\ridgeFold{}$ at fixed
transmissibility (Supplementary Material, Sec.~S10), and under real activity
heterogeneity even the \emph{sign} of the node--link difference reverses relative to the
uniform-activity poles, because heterogeneous rates turn node self-excitation into bursty
superspreading. This is not an irreducible ambiguity ($\kappa$ is identifiable and the
artefactual $\nnode$ is diagnosable), but the price of mistaking the one for the other. That
price is not hypothetical: simulating from the model actually \emph{fitted} to the high-school
record, whose $\nnode\approx0.9$ is the marginal artefact, under-predicts the real outbreak by
up to a factor of $\fitUnderFold{}$ across transmissibilities, because the spurious node
burstiness clusters contacts and suppresses spread (Supplementary Material, Sec.~S11).

At scale (up to $N=10^4$, $\sim\!10^6$ contacts) the misattribution is a shift of the epidemic
threshold, not merely a rescaling, and the two canonical dynamics register it differently
(Fig.~\ref{fig:dynpoles_main}). Under SIR, a transient with no absorbing state, the node-driven
pole is super-critical, producing a macroscopic outbreak, while a Poisson or tie-reinforced
sequence of the same activity stays sub-critical, the gap \emph{sharpening} with system size.
Under SIS, whose endemic phase is sustained by reinfection, the node pole crosses into the active
phase at a lower transmissibility ($\beta_c\!\approx\!0.14$) than the memoryless and
tie-reinforced sequences ($\beta_c\!\approx\!0.30$). In both dynamics tie reinforcement is
dynamically near-silent (indistinguishable from a memoryless Poisson sequence), so the operative
contrast is node-driven versus memoryless. We do not claim a diverging critical susceptibility:
the naive susceptibility grows only through its explicit size prefactor, and prefactor-free
finite-size diagnostics (a quasi-stationary susceptibility that grows sub-linearly, Binder
cumulants without a common crossing) localise the threshold shift without resolving a sharp
critical point (Supplementary Material, Sec.~S13). The consequence stands regardless: mistaking
the carrier moves the system across the epidemic threshold.

\begin{figure*}[t]
  \centering
  \includegraphics[width=\textwidth]{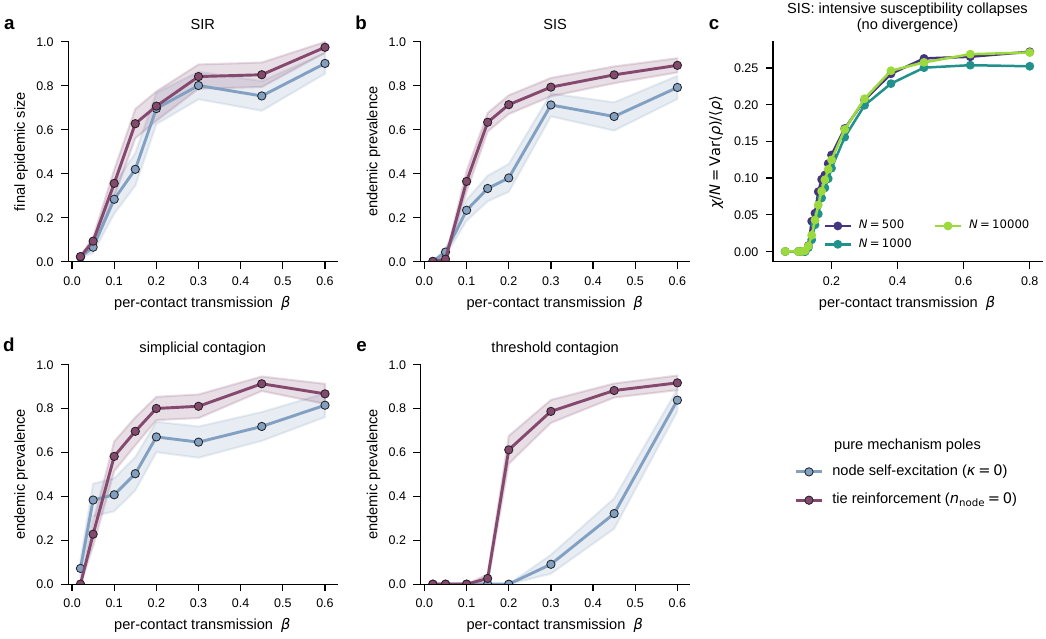}
  \caption{\textbf{Spreading consequences of the mechanism, across contagion types and at scale.}
  The two pure poles, node self-excitation ($\kappa=0$) and tie reinforcement
  ($n_{\rm node}=0$), matched on activity and base attractiveness, versus the per-contact
  transmission $\beta$ (bands are standard errors over realisations). \textbf{(a)}~SIR final size
  and \textbf{(b)}~SIS endemic prevalence (simple contagion); \textbf{(d)}~simplicial and
  \textbf{(e)}~threshold contagion (complex contagion). The two mechanisms, indistinguishable in
  the coarse statistics, produce epidemics that differ by tens of percent under \emph{all}
  dynamics, so mis-identifying the carrier mis-forecasts the outbreak. \textbf{(c)}~The intensive
  SIS susceptibility $\chi/N=\mathrm{Var}(\rho)/\langle\rho\rangle$ collapses across
  $N=500$--$10^4$: the lower node-pole endemic threshold is real, but the growth of $\chi$ is the
  size prefactor, not a diverging critical susceptibility (full finite-size scaling, and the
  at-scale SIR/Poisson-baseline comparison, in Supplementary Material, Sec.~S13).}
  \label{fig:epidemic}\label{fig:dynpoles_main}
\end{figure*}

\section{Discussion}
\label{sec:discussion}

\subsection*{What the separation result does and does not say}

Proposition~\ref{prop:orthogonality} is a statement about a model class, not about
nature. It says that within the family of the Methods, and given initiator
labels, node memory and tie memory are estimated from disjoint sufficient statistics and
therefore cannot be traded against one another. That is a useful thing to know, because
it isolates the geometric origin of cross-channel confounding: in face-to-face data, the
cross-talk between node-level and tie-level mechanisms is driven by the \emph{missing
initiator label}. Instrumentation that recorded directionality, even imperfectly or
for a subsample, would decouple tie reinforcement from node-level dynamics in a way that
no amount of additional undirected data can.

However, resolving cross-channel coupling does not make node memory identifiable. 
Observational temporal networks face a two-fold identifiability boundary. 
The first boundary is structural and label-dependent: without the initiator label, the mechanisms 
cannot be separated without an E-step confound. When the data carry genuine exponential-Hawkes node memory
(as in our large-scale synthetic benchmark, $\confSynN{}$ nodes), this manifests as a
latent-label confound that survives posterior smoothing: the estimate stays tight, but
$\nnode$ and $\kappa$ trade off along a thin, slanted ridge (correlation $\confSynSmooth{}$).
Recording directionality completely eliminates this ridge.
Yet on empirical contact streams, this label confound is preempted by a second, deeper, 
and label-independent failure mode: heavy-tailed inter-event distributions cause point-process 
likelihoods to peg $\nnode$ to the renewal marginal law, regardless of whether directionality 
is observed or unobserved. Because real proximity and online communication records exhibit 
heavy-tailed timing, the fitted node self-excitation is a marginal artefact that is unresponsive 
to latent labels and persists in fully directed human and algorithmic records (email, Stack Overflow, 
bots, and microservices). Directionality isolates tie reinforcement $\kappa$, but it cannot 
rescue node self-excitation $\nnode$ from the timing marginal.

We regard that as the most actionable consequence of the analysis, and it is worth saying
plainly what a practitioner \emph{gains} from a result that is, on its face, negative. Three
things: first, a \emph{test} (the inter-event-order shuffle and the $(B,M)$ plane) that indicates
whether a fitted $\nnode$ on one's own data is memory or a marginal artefact before it is
reported; second, an \emph{estimator fix} (the hierarchical prior on the baseline activities) that
restores honest coverage in the few-events-per-node regime where naive intervals are
confidently wrong; and third, from the attribution principle, an \emph{a priori} boundary map of which
mechanisms a given instrumentation can and cannot recover: recording contact directionality
is the decisive intervention to decouple tie reinforcement from node-level trade-offs, but
recovering genuine heavy-tailed node dynamics requires observational access to internal node
states or external queues rather than contact timings alone. The negative empirical result is
thus the evidence for a positive methodological deliverable: an explicit map of identifiability
limits that guides future study design before data collection.

The result does not say that the model class is right, that $\nnode$ and $\kappa$ are the
only mechanisms operating, or that a mechanism outside the class could not mimic either.
This is why the goodness-of-fit analysis of the ``Goodness of fit'' subsection carries the empirical
weight: an estimator that stayed confident under the misspecifications these data
contain (group contacts recorded as simultaneous dyads, and near-renewal heavy-tailed
inter-event laws) would undermine the empirical claims regardless of the identifiability
theory. In fact the diagnostics reject the model exactly where those misspecifications
sit, which is what licenses the negative reading rather than a confident positive one.

\subsection*{Relation to existing mechanism families}

The node self-excitation term reduces, at $\kappa=0$, to a continuous-time
activity-driven model with non-Markovian node activity, close in spirit to the
memory extensions of the activity-driven class~\cite{Karsai2014,Ubaldi2017}, to
self-excitement mechanisms for the activity itself~\cite{Zino2018}, and to work
on aging in temporal networks~\cite{Tizzani2018}. The tie-reinforcement term at
$\nnode=0$ is a choice-model analogue of edge-level memory~\cite{Vestergaard2014} and of
dyadic self-exciting processes~\cite{Butts2008}. The contribution here is not either
limit but the joint parametrisation, which makes ``how much of each'' a well-posed
estimation question rather than a modelling preference.

Williams \emph{et al.}~\cite{Williams2022NC,Williams2022PRE} characterise the
multidimensional structure of memory in temporal networks and, in particular, how a
link's activity depends on the past of other links. That work and ours address adjacent
questions: theirs is descriptive and concerns the shape of memory, ours is inferential
and concerns which carrier the memory sits on. A direct comparison on the same
datasets would be informative and is not attempted here.

The identifiability structure is not specific to social contacts. The same
background-versus-self-excitation ambiguity, and the same incidental-parameter degradation
when a per-unit baseline rate is profiled from few events, arise wherever a self-exciting
process is fitted with a heterogeneous immigration rate: in the epidemic-type aftershock
(ETAS) models of seismology, where separating background seismicity from triggered
aftershocks is a long-standing difficulty, and in neural spike-train inference, where tonic
drive and network-driven bursting play the roles of baseline activity and self-excitation.
The attribution principle applies verbatim: a mechanism is orthogonally recoverable exactly
when its likelihood is free of the missing attribution label, so we expect both the shuffle
diagnostic and the hierarchical-prior remedy to transfer to those settings.

\subsection*{Limitations}

\paragraph*{Exponential kernels.} All memory is exponential, chosen because it admits the
$O(1)$ recursion of Eq.~\eqref{eq:recursion}. Empirical inter-event distributions are
heavy-tailed, and aging effects in particular are poorly represented by a single
exponential. A sum of exponentials on a fixed logarithmic grid of rates preserves the
recursion at proportionate cost and approximates a power law over any finite dynamic
range. A proof-of-concept fit of such a mixture kernel
$\phi(\tau)=\sum_r w_r\,\gamma_r e^{-\gamma_r\tau}$, with $\sum_r w_r=1$ so that $\nnode$
stays the branching ratio, bears this out and, more importantly, does not rescue the
empirical reading. The mixture fits the contact timing substantially better than a single
exponential (timing log-likelihood gain $+\sumExpHospGain{}$ on the hospital ward,
$+\sumExpSchoolGain{}$ on the high school), placing the node memory on two to three
timescales of minutes rather than one, so the single exponential is indeed misspecified.
Yet the branching ratio barely moves ($\nnode: \sumExpSchoolSingle{}\!\to\!\sumExpSchoolMix{}$
for the school, $\sumExpHospSingle{}\!\to\!\sumExpHospMix{}$ for the hospital): the
near-critical $\nnode\approx1$ is not an artefact of the kernel \emph{shape}, since a
flexible, better-fitting kernel leaves it in place, and the individual rates are less
identified than their sum, as anticipated.

\paragraph*{Approximate E-step.} The latent-initiator estimator uses sequential
imputation, a filtering approximation to the label posterior (the Methods). Its
bias is not merely a theoretical worry, and on directed data we can measure it: fitting the
same events with the exact observed-initiator likelihood and with sequential imputation,
the imputed $\nnode$ is biased by $\eStepBiasSyn{}$ on synthetic data of known truth and
$\eStepBiasSnn{}$ on the spiking network ($\snnObsNnode{}\!\to\!\snnLatNnode{}$), and by
more than $+100\%$ on the email log, with tied timestamps compounding the approximation
($\emObsNnode{}\!\to\!\emLatNnode{}$). The bias is thus real, dataset-dependent, and can be
large; this is why we report latent-initiator estimates only as coordinates in the
mechanism plane, never as measurements. A Metropolis-within-Gibbs smoother (Supplementary
Material, Sec.~S3) corrects part of this bias and confirms that the residual confound is
genuine information loss rather than an artefact of the imputation; when directionality is
recorded instrumentally the bias does not arise at all.

\paragraph*{Instantaneous contacts and higher-order structure.} Sessionisation converts a
proximity record into instantaneous events and discards duration, which is real information
and plausibly mechanism-relevant. The weighted generalisation already locates this extension:
a contact duration is a \emph{pair-attributed} weight, so by Corollary~\ref{cor:dichotomy} it
is identifiable under a latent initiator (given its own memory timescale), and it enriches the
tie channel rather than resolving the node one; a realistic duration law is heavy-tailed, not
the exponential of the closed-form illustration. Group interactions are a separate matter: a
native higher-order (hyperedge) encoding removes the simultaneous-dyad zero atom by
construction, but our diagnostics show it does \emph{not} repair the residual fit: the
serial-correlation defect and the near-critical $\nnode$ survive it (Supplementary Material,
Sec.~S3), so the outstanding misspecification is the exponential kernel shape, not the dyadic
projection. Both are left to future work.

\paragraph*{Model adequacy.} The residual diagnostics of the ``Goodness of fit'' subsection reject the
model on the timing block in every setting. The empirical $(\nnode,\kappa)$ are therefore
descriptions of a fitted object, not measurements of a mechanism, and we report them only
to locate the data within the mechanism plane of the ``What the data can and cannot see'' subsection.

\subsection*{Outlook}

The identifiability question addressed here is a prerequisite for, rather than an
alternative to, dynamical questions about temporal networks. If node-driven and
link-driven memory cannot be distinguished from contact data, then statements about how
memory affects spreading are statements about a modelling choice. The separation result
suggests they can be distinguished in principle, and identifies the specific missing
observable that obstructs it in practice.

A concrete extension is to neural wiring. Recent functional connectomics finds that cortical
neurons with similar response properties are preferentially connected, a \emph{probabilistic}
``like-to-like'' rule that refines, rather than confirms, the deterministic slogan that neurons
which fire together wire together~\cite{Ding2025,MICrONS2025}. Our attribution--orthogonality
principle predicts exactly this asymmetry: a symmetric, pair-attributed similarity covariate
$s_{ij}$ is label-free and therefore identifiable orthogonally to the other mechanisms, whereas
co-firing-driven node self-excitation ($H_i$) is label-dependent and, on heavy-tailed data, a
marginal artefact, consistent with the identifiability collapse we observe on the real cortical
spiking record. The framework thus supplies the temporal, dynamical layer that static
connectomics cannot resolve: separating a genuine similarity-driven wiring rule from the confound
of node activity. Testing this requires augmenting the selection model with a functional-similarity
term and fitting it to spike sequences, which we leave to future work.

\section{Methods}
\subsection*{Model}
\label{sec:model}

\subsubsection*{Events and mechanisms}

Let $V$ be a set of $N$ nodes. We model the contact sequence as a marked point process
on $[0,T]$ whose events are tuples
\begin{equation}
\begin{split}
  &(t_e, i_e, j_e, w_e), \qquad e = 1,\dots,E, \\
  &0 < t_1 \le \dots \le t_E \le T ,
\end{split}
\end{equation}
where $t_e$ is the event time, $i_e \in V$ is the \emph{initiator}, $j_e \in V
\setminus \{i_e\}$ the target, and $w_e \in \mathcal W$ an optional \emph{interaction
weight} (mark), such as a duration, intensity, or count carried by the event, with
$w_e \equiv \varnothing$ recovering the unweighted model. The contact itself is undirected;
the initiator label records which of the two endpoints generated the event. The ``Latent
initiator'' subsection treats the case, relevant to all proximity data, in which the label
is not observed.

Let $\hist_{t^-}$ denote the history strictly before $t$, and write
\begin{equation}
\begin{split}
  H_i(t) &= \{ t_k < t : i_k = i \}, \\
  H_{ij}(t) &= \{ t_k < t : \{i_k,j_k\} = \{i,j\} \}
\end{split}
\end{equation}
for the set of past times at which $i$ initiated an event, and at which the unordered
pair $\{i,j\}$ was in contact, respectively. Note that $H_i$ is defined by the
\emph{initiator} label whereas $H_{ij}$ is label-free; this asymmetry is what drives the
identifiability structure of the Methods.

The model contains three mechanisms.

\paragraph*{(A0) Memoryless node activity.} Each node carries a baseline initiation rate
$a_i \ge 0$. Taken alone this is the activity-driven model of Perra
\emph{et al.}~\cite{Perra2012} in continuous time.

\paragraph*{(A1) Node self-excitation.} A node that has recently acted is more likely to
act again, which generates burstiness in the timing of a node's events. The node
initiation intensity is the Hawkes-type~\cite{Hawkes1971} conditional intensity
\begin{equation}
\begin{split}
  \rho_i(t \mid \hist_{t^-}) &= a_i + \nnode \!\!\sum_{t_k \in H_i(t)} \!\! \phi_{\mathrm{n}}(t - t_k), \\
  \phi_{\mathrm{n}}(\tau) &= \gnode e^{-\gnode \tau} .
\end{split}
  \label{eq:rho}
\end{equation}
The kernel is normalised to unit integral, $\int_0^\infty \phi_{\mathrm{n}} = 1$, so
that $\nnode$ is the \emph{branching ratio}: the expected number of events directly
triggered by one event. It is dimensionless, and stationarity of the process is exactly
$\nnode < 1$.

\paragraph*{(B) Tie reinforcement.} Having interacted with a given target recently makes
that target more likely to be chosen again, which is memory carried by the link rather
than the node~\cite{Karsai2014,Vestergaard2014,Ubaldi2017}. Given that $i$ initiates an
event at $t$, the target is drawn from
\begin{equation}
\begin{split}
  \pi_{j \mid i}(t) &= \frac{b_j + \kappa \, m_{ij}(t)}{Z_i(t)}, \\
  Z_i(t) &= \sum_{l \neq i} \bigl( b_l + \kappa \, m_{il}(t) \bigr),
\end{split}
  \label{eq:pi}
\end{equation}
with tie memory
\begin{equation}
  m_{ij}(t) = \sum_{t_k \in H_{ij}(t)} e^{-\gedge (t - t_k)} .
  \label{eq:m}
\end{equation}
Here $b_j \ge 0$ is a node attractiveness and $\kappa \ge 0$ the reinforcement strength.

\paragraph*{(C) Interaction weight (optional mark).} When events carry a weight $w_e$, it is
drawn from a mark law $\pi_{\mathrm W}(w_e \mid \hist_{t_e^-}; \thW)$ that contributes a third
parameter block $\thW$. Two structurally distinct regimes matter for identifiability,
mirroring the label-defined ($H_i$) versus label-free ($H_{ij}$) asymmetry above. A
\emph{pair-attributed} weight depends only on the unordered pair and the label-free tie
history $H_{ij}$ (for instance a contact's duration); an \emph{initiator-attributed} weight
depends on the ordered initiator and its own history $H_i$ (for instance a node's intensity
of engagement when it initiates). As the ``Identifiability'' subsection shows, it is this
distinction, not the presence of a weight, that decides whether the mark inherits the
initiator confound.

\begin{figure*}[t]
  \centering
  \includegraphics[width=\textwidth]{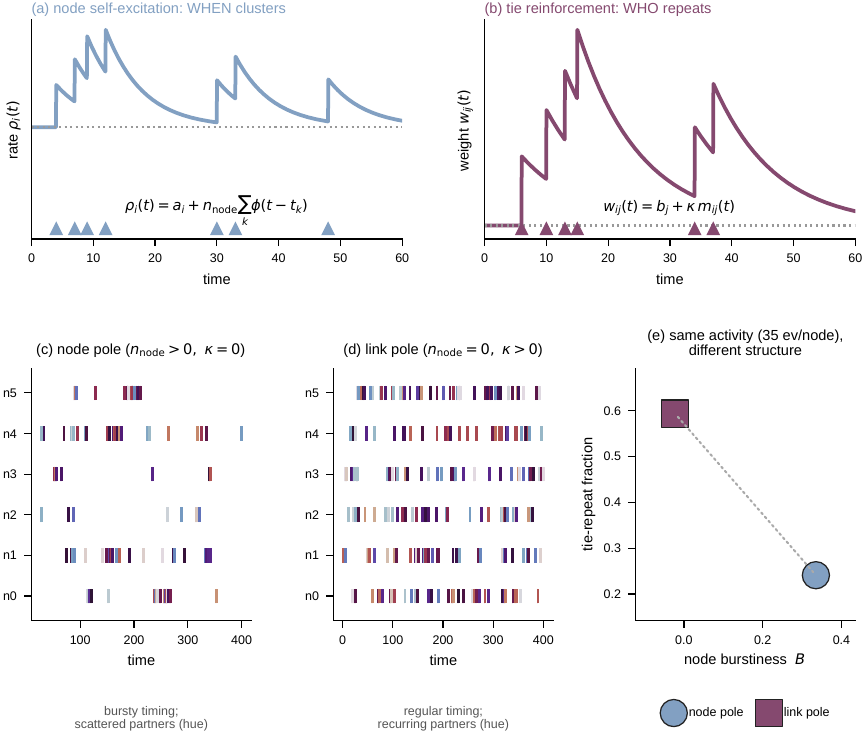}
  \caption{\textbf{The two memories act on orthogonal features of an event, and leave
  orthogonal signatures.} \emph{(a)}~Node self-excitation raises \emph{when} a node acts:
  the initiation intensity $\rho_i(t)$ of Eq.~\eqref{eq:rho} jumps after each event and
  decays, producing temporal bursts. \emph{(b)}~Tie reinforcement raises \emph{which}
  target is chosen: the selection weight $b_j+\kappa\,m_{ij}(t)$ of Eq.~\eqref{eq:pi}
  climbs each time a tie is used, so the same tie recurs. \emph{(c,d)}~Event rasters (time
  vs.\ node; hue $=$ target) from the two pure poles, simulated with \emph{matched} baseline
  activity and attractiveness so that the per-node event count (the marginal a modeller
  usually sees) is the same ($\approx35$ events/node). At the node pole
  ($\nnode>0,\ \kappa=0$) each node's events cluster in time but scatter across targets;
  at the link pole ($\nnode=0,\ \kappa>0$) they are spread in time but land on a few
  recurring targets. \emph{(e)}~The contrast made quantitative: the poles share the same
  activity yet separate on orthogonal structural axes, node burstiness $B$ and the
  tie-repeat fraction. This orthogonality is the geometric content of
  Proposition~\ref{prop:orthogonality}.}
  \label{fig:dynamics}
\end{figure*}

\subsubsection*{Normalisations, and why they are not cosmetic}
\label{sec:normalisation}

Two normalisations are imposed, and both are required for the mechanism parameters to
mean anything.

First, the node kernel in Eq.~\eqref{eq:rho} has unit \emph{integral}. This is what
makes $\nnode$ a branching ratio rather than an amplitude with units of inverse time,
and it is what makes the stability condition read $\nnode<1$ independently of $\gnode$.

Second, the tie kernel in Eq.~\eqref{eq:m} has unit \emph{peak}: a target contacted an
instant ago contributes exactly $\kappa$ of excess weight against a baseline of $b_j$.
Combined with the constraint
\begin{equation}
  \frac{1}{N}\sum_{j \in V} b_j = 1 ,
  \label{eq:bnorm}
\end{equation}
this makes $\kappa$ directly interpretable: a just-contacted target is $1+\kappa$ times
as attractive as an average one. Constraint~\eqref{eq:bnorm} is not optional. The
selection probability~\eqref{eq:pi} is invariant under
$(\bm{b}, \kappa) \mapsto (c\,\bm{b}, c\,\kappa)$ for any $c>0$, so without a fixed
normalisation of $\bm{b}$ the parameter $\kappa$ is defined only up to an arbitrary
scale.

This point is the reason for the present parametrisation. A natural-looking alternative,
used in earlier formulations of the problem, is to write a dyadic intensity as a convex
mixture of a node-driven and a link-driven term,
\begin{equation}
\begin{split}
  \lambda_{ij}(t) &= \alpha \, a_i a_j \\
    &\quad + (1-\alpha) \Bigl( \mu + \sum_{t_k \in H_{ij}(t)} \varphi(t-t_k) \Bigr),
\end{split}
  \label{eq:badmodel}
\end{equation}
and to read $\alpha$ as the mechanism weight. Equation~\eqref{eq:badmodel} is not
identified: the product $a_i a_j$ has a free overall scale, and $a_i \mapsto c\, a_i$ is
absorbed exactly by $\alpha \mapsto \alpha / c^2$. If the activities are estimated by a
plug-in moment rule and then held fixed (the usual practice), $\hat\alpha$ reports
the arbitrary normalisation of that rule rather than a property of the data. In our
numerical experiments, rescaling the supplied activities by a factor of two moved
$\hat\alpha$ from the ``mixed'' regime to the ``strongly link-driven'' regime on
\emph{unchanged} data. In the present model no analogous freedom exists: the activities
$\{a_i\}$ are profiled out of the likelihood rather than plugged in
(the Methods), and $\nnode$ and $\kappa$ are dimensionless ratios pinned by
the two kernel normalisations and by Eq.~\eqref{eq:bnorm}.

A second defect of Eq.~\eqref{eq:badmodel} is structural rather than numerical. Its
node-driven pole, $\lambda_{ij}=\alpha a_i a_j$, is an inhomogeneous Poisson process with
\emph{independent dyads}: there is no node activation event, and therefore none of the
correlation among a node's simultaneous contacts that defines activity-driven dynamics.
Moreover its A pole is memoryless while its B pole carries memory, so a single parameter
$\alpha$ conflates the node-versus-link axis with the memoryless-versus-memory axis.
Evidence for $\alpha<1$ is then evidence for memory of some kind, not evidence that the
memory resides on links. Separating $\nnode$ from $\kappa$, as in
Eqs.~\eqref{eq:rho}--\eqref{eq:m}, removes both defects: node burstiness and tie
reinforcement are distinct parameters that can be nonzero independently.

\subsubsection*{Induced dyadic intensities and stationary rate}

The directed intensity of contacts initiated by $i$ towards $j$ is
\begin{equation}
  \lambda_{i \to j}(t) = \rho_i(t)\, \pi_{j \mid i}(t) ,
\end{equation}
and the intensity of the undirected contact $\{i,j\}$ is
$\lambda_{ij} = \lambda_{i \to j} + \lambda_{j \to i}$.

Because each node's ground process is a linear Hawkes process with immigration rate
$a_i$ and branching ratio $\nnode$, the stationary expected rate of node $i$ is
$a_i/(1-\nnode)$ and the total event rate is
\begin{equation}
  \Lambda = \frac{1}{1-\nnode} \sum_{i \in V} a_i ,
  \qquad \nnode < 1 .
  \label{eq:rate}
\end{equation}
Equation~\eqref{eq:rate} is a useful implementation check: it holds only if the node
kernel is unit-integral, so a simulator that violates it has a normalisation error. The
target-selection mechanism does not enter Eq.~\eqref{eq:rate}, since $\kappa$ redistributes
contacts among targets without changing how often a node acts. This is a first hint of
the separation established in the Methods.

\subsection*{Likelihood}
\label{sec:likelihood}

\subsubsection*{Complete-data log-likelihood}

Write $\thT = (\bm{a}, \nnode, \gnode)$, $\thS = (\bm{b}, \kappa, \gedge)$ and, when a weight is
present, $\thW$ for the timing, selection and weight parameters. For a marked point process
observed on $[0,T]$ with ground intensity $\rho_i$, target law $\pi_{j\mid i}$ and weight law
$\pi_{\mathrm W}$, the log-likelihood is~\cite{DaleyVereJones,Ogata1978}
\begin{equation}
\begin{split}
  \loglik(\thT,\thS,\thW) &= \sum_{e=1}^{E} \Bigl[ \log \rho_{i_e}(t_e) + \log \pi_{j_e \mid i_e}(t_e)
      + \log \pi_{\mathrm W}(w_e) \Bigr] \\
  &\quad - \sum_{i \in V} \int_0^T \!\! \rho_i(s)\, ds ,
\end{split}
  \label{eq:loglik}
\end{equation}
all intensities being evaluated at $t_e^-$, i.e.\ conditionally on the strict history; the
weight term is present only for marked events, and dropping it recovers the unweighted
likelihood.

The compensator is available in closed form. Using
$\int_{t_k}^{T} \gnode e^{-\gnode(s-t_k)} ds = 1 - e^{-\gnode(T-t_k)}$,
\begin{equation}
\begin{split}
  \int_0^T \!\! \rho_i(s)\, ds &= a_i T + \nnode\, C_i(\gnode), \\
  C_i(\gnode) &= \!\!\sum_{t_k \in H_i(T)} \!\! \bigl( 1 - e^{-\gnode (T-t_k)} \bigr).
\end{split}
  \label{eq:compensator}
\end{equation}
Substituting Eqs.~\eqref{eq:rho}, \eqref{eq:pi} and \eqref{eq:compensator} into
Eq.~\eqref{eq:loglik} and introducing the node and dyad excitation states
\begin{align}
  S_i(t) &= \!\!\sum_{t_k \in H_i(t)} \!\! \gnode e^{-\gnode(t-t_k)}, \notag \\
  M_i(t) &= \sum_{l \neq i} m_{il}(t),
  \label{eq:states}
\end{align}
gives the decomposition on which everything else rests:
\begin{equation}
  \boxed{\;\loglik(\thT,\thS,\thW) = \logT(\thT) + \logS(\thS) + \logW(\thW)\;}
  \label{eq:split}
\end{equation}
with
\begin{align}
  \logT(\thT) &= \sum_{e=1}^{E} \log\!\bigl( a_{i_e} + \nnode S_{i_e}(t_e^-) \bigr) \notag \\
   &\quad - \sum_{i \in V} \bigl[ a_i T + \nnode C_i(\gnode) \bigr],
  \label{eq:logT}\\[2pt]
  \logS(\thS) &= \sum_{e=1}^{E} \Bigl[
      \log\!\bigl( b_{j_e} + \kappa\, m_{i_e j_e}(t_e^-) \bigr) \notag \\
      &\quad - \log Z_{i_e}(t_e^-) \Bigr],
  \label{eq:logS}
\end{align}
where $Z_i(t) = (B - b_i) + \kappa M_i(t)$ and $B = \sum_l b_l = N$ by
Eq.~\eqref{eq:bnorm}. When events are marked, a third block evaluated on the realised
weights alone,
\begin{equation}
  \logW(\thW) = \sum_{e=1}^{E} \log \pi_{\mathrm W}\bigl( w_e \mid \hist_{t_e^-}; \thW \bigr),
  \label{eq:logW}
\end{equation}
joins the decomposition, depending on the labels only through whichever history ($H_i$ or
$H_{ij}$) its attribution selects. Equation~\eqref{eq:split} is thus additively separable
across all three blocks, each a function of its own parameters and its own sufficient
statistics, which is the property the identifiability results turn on.

\subsubsection*{Linear-time evaluation}

Both $S_i$ and $m_{ij}$ obey a one-step recursion, since for exponential kernels
\begin{equation}
\begin{split}
  S_i(t_{e+1}^-) &= e^{-\gnode \Delta_e}\bigl[ S_i(t_e^-) + \gnode \,\delta_{i,i_e} \bigr], \\
  \Delta_e &= t_{e+1}-t_e ,
\end{split}
  \label{eq:recursion}
\end{equation}
and analogously for $m_{ij}$ and $M_i$ with rate $\gedge$ and unit increments. Evaluating
Eqs.~\eqref{eq:logT}--\eqref{eq:logS} therefore costs $O(E)$ rather than the
$O(\sum_i K_i^2)$ of a direct kernel resummation, where $K_i=|H_i(T)|$. This is what
makes fitting the full contact record feasible; earlier treatments of this problem
truncated the data to a few thousand events for exactly this reason, which is both
unnecessary and, because the truncation selects a contiguous stretch of one day, a
source of daily-cycle bias.

\subsubsection*{Unboundedness under coincident events}
\label{sec:coincident}

Empirical proximity data is recorded on a discrete grid (\samplingGrid{}~s for
SocioPatterns), so many events share a timestamp exactly. This is not a harmless
numerical annoyance: it makes the likelihood unbounded.

\begin{proposition}[Likelihood divergence on tied timestamps]
\label{prop:divergence}
Suppose there exist a node $i$ with $a_i > 0$ and a time $t$ at which $i$ initiates at
least two events, and let $\nnode > 0$ be fixed. Then
$\sup_{\gnode > 0} \logT(\thT) = +\infty$.
\end{proposition}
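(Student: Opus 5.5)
We will prove the statement by showing directly that $\logT$ grows without bound as $\gnode\to\infty$, with everything except $\gnode$ held fixed. Recall from Eq.~\eqref{eq:logT} that
\[
\logT(\thT)=\sum_{e}\log\bigl(a_{i_e}+\nnode S_{i_e}(t_e^-)\bigr)-\sum_{l\in V}\bigl[a_lT+\nnode C_l(\gnode)\bigr].
\]
The plan is to bound the event sum from below and the compensator from above, separately in $\gnode$.

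\emph{Compensator.} Each summand of $C_l(\gnode)$ lies in $[0,1]$, so $0\le C_l(\gnode)\le K_l$, where $K_l=|H_l(T)|$. The compensator term is therefore bounded below by $-\sum_l a_lT-\nnode E$, uniformly in $\gnode$.

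\emph{Event sum.} Every summand satisfies $\log(a_{i_e}+\nnode S_{i_e})\ge\log a_{i_e}$, because $S\ge0$. This lower bound is finite as long as $a_{i_e}>0$.

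\emph{Tied pair.} Now take the two events $e<e'$ initiated by $i$ at the common time $t$. Since $H_i(t')$ collects strictly earlier events $t_k<t'$, we need to fix how ties are ordered. Following the sequential ordering $t_1\le\dots\le t_E$ and the recursion \eqref{eq:recursion} with $\Delta_e=0$, the state seen by event $e'$ includes the contribution $\gnode e^{-\gnode\cdot 0}=\gnode$ from event $e$. This gives $S_i(t_{e'}^-)\ge\gnode$, and hence
\[
\log\bigl(a_i+\nnode S_i(t_{e'}^-)\bigr)\ge\log(\nnode\gnode)\to+\infty \quad\text{as } \gnode\to\infty .
\]

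\emph{Conclusion.} Adding the pieces gives
\[
\logT\ge\log(\nnode\gnode)+\sum_{e\ne e'}\log a_{i_e}-\sum_l a_lT-\nnode E,
\]
which diverges as $\gnode\to\infty$.

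\emph{Main obstacle.} The delicate step is the convention for coincident timestamps. If $H_i(t)$ is read strictly as $t_k<t$, a tied event does not excite its partner, and the divergence must come from elsewhere. I would handle this by stating explicitly that events are processed in index order, which is the convention the recursion \eqref{eq:recursion} implements and under which the zero rescaled times reported in the goodness-of-fit analysis arise.

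A second technicality is that other events' initiators may have $a_l=0$, which would make $\log a_{i_e}$ equal to $-\infty$. I would avoid this either by assuming all $a_l>0$ at the fixed parameter point, or by observing that such summands have $S_{i_e}\ge 0$ decreasing at worst like $\gnode e^{-\gnode\Delta}$. That route needs care, and the cleanest fix is to hold at a point where the baseline rates of all initiators are positive, so that the supremum claim follows.
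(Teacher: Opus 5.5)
Your argument is correct and is essentially the paper's proof: the zero-lag contribution of the first tied event gives $S_i(t_{e'}^-)\ge\gnode$, every other event term is bounded below by $\log a_{i_e}$, and the compensator term is bounded by $\sum_l a_l T+\nnode E$ uniformly in $\gnode$. Both caveats you raise are real and are also glossed over in the paper's proof, which silently uses the index-order convention of Eq.~\eqref{eq:recursion}; the positivity caveat is not merely technical, since any initiator $l$ with $a_l=0$ has $S_l=0$ at its first event, making $\logT=-\infty$ for every $\gnode$, so your alternative decay-rate route cannot work and the assumption $a_l>0$ for all initiators is genuinely needed (the statement as written omits it).
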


\begin{proof}
Order the coincident events at $t$ and consider the second one, $e^\star$. By
Eq.~\eqref{eq:states} the first contributes to $S_i$ at zero lag, so
$S_i(t_{e^\star}^-) \ge \gnode$ and the corresponding term in Eq.~\eqref{eq:logT} is at
least $\log(a_i + \nnode \gnode) \to +\infty$ as $\gnode \to \infty$. Every other event term is
bounded below by $\log a_{i_e} > 0$-independent constants, since $S \ge 0$. The
compensator is bounded above: $C_i(\gnode) \le K_i$ for all $\gnode$, so
$\sum_i [a_i T + \nnode C_i] \le T\sum_i a_i + \nnode E$, a constant. Hence
$\logT \to +\infty$.
\end{proof}

The practical consequence is that an unconstrained optimiser drives $\gnode$ to its upper
bound and reports a spurious fit with an apparently excellent likelihood. We therefore
restrict the decay rates to
\begin{equation}
\begin{split}
  \frac{1}{10\,T} &\;\le\; \gnode,\ \gedge \;\le\; \frac{1}{\Delta_{\min}}, \\
  \Delta_{\min} &= \min_{e:\, \Delta_e > 0} \Delta_e ,
\end{split}
  \label{eq:gammabounds}
\end{equation}
where $\Delta_{\min}$ is the smallest strictly positive inter-event gap. The upper bound
is a statement about resolution, not a numerical expedient: a memory kernel decaying
faster than the sampling grid is not estimable from data on that grid. The lower bound
excludes kernels flat over the observation window, which are indistinguishable from a
shift in the baseline $a_i$.

\begin{remark}
Proposition~\ref{prop:divergence} is not specific to our parametrisation. Any
self-exciting point process fitted to grid-sampled data with an unbounded decay rate
suffers from it. It is worth stating because the failure is silent: the reported
log-likelihood improves monotonically as the fit becomes meaningless.
\end{remark}

\subsubsection*{Latent initiator}
\label{sec:latent}

Proximity sensors record an undirected contact and no initiator label. The observed data
are then $c_e = (t_e, \{u_e, v_e\})$, and the complete data are recovered by a label
$\sigma_e \in \{0,1\}$ with $i_e = u_e$ if $\sigma_e = 0$ and $i_e = v_e$ otherwise. The
observed-data likelihood is the marginal
\begin{equation}
  L(\thT,\thS) = \sum_{\bm{\sigma} \in \{0,1\}^{E}}
    \exp \loglik\bigl( \thT, \thS \,;\, \bm{c}, \bm{\sigma} \bigr) .
  \label{eq:marginal}
\end{equation}
Equation~\eqref{eq:marginal} does \emph{not} factorise over events. The dyad memories
$m_{ij}$ and their row sums $M_i$ are label-free, but the node states $S_i$, the
compensators $C_i$, and the choice of normaliser $Z_{i_e}$ all depend on the labels of
all earlier events. The sum therefore has $2^E$ terms with no product structure, and
must be approximated; the Methods describes the Monte Carlo EM scheme used here
and states plainly what it does and does not guarantee.

\subsection*{Identifiability}
\label{sec:identifiability}

\subsubsection*{An attribution--orthogonality principle}

The decomposition~\eqref{eq:split} is an instance of a general principle, and stating it in
general clarifies what is and is not special about our model. What governs identifiability is
neither directedness nor weighting but whether the per-event \emph{attribution labels} (which
endpoint initiated, and, for a weight, which endpoint's process carries it) are observed. We
state the principle for an arbitrary number of mechanism blocks and then read off the model of
this paper as the case at hand.

\begin{proposition}[Attribution orthogonality]
\label{prop:orthogonality}
Let the complete-data log-likelihood be additively separable across $B$ variation-independent
parameter blocks, $\loglik_{\mathrm c}(\bm\theta)=\sum_{b=1}^{B}\loglik_b(\bm\theta_b)$ with
$\bm\Theta=\prod_b\bm\Theta_b$, where each $\loglik_b$ may depend on latent per-event
attribution labels $Z$ through its own sufficient statistics but on no other block's
parameters. Then:
\begin{enumerate}
\item[\emph{(i)}] \emph{Observed attribution.} If $Z$ is observed, the Fisher information is
block diagonal, $\mathcal I(\bm\theta)=\mathrm{diag}(\mathcal I_1,\dots,\mathcal I_B)$; the
block maximum-likelihood estimators are asymptotically independent and each block's profile
likelihood is invariant to the others.
\item[\emph{(ii)}] \emph{Latent attribution.} If a sub-vector $z^{\mathrm{mis}}$ of the labels
is marginalised, Louis' identity gives $\mathcal I_{\mathrm{obs}}=\mathcal I_{\mathrm c}
-\E_O\!\big[\Cov_{z^{\mathrm{mis}}\mid O}(\nabla\loglik_{\mathrm c})\big]$, whose $(b,b')$
off-diagonal block is $-\E_O\!\big[\Cov_{z^{\mathrm{mis}}\mid O}(\nabla_{\bm\theta_b}\loglik_b,\,
\nabla_{\bm\theta_{b'}}\loglik_{b'})\big]$. Hence a block that is \emph{label-free} in
$z^{\mathrm{mis}}$ stays orthogonal to every other block; two blocks that both depend on
$z^{\mathrm{mis}}$ generically confound, the coupling vanishing as the posterior concentrates
($\Ent(z^{\mathrm{mis}}\mid O)\!\to\!0$), and under uniform label revelation scaling with the residual
entropy order parameter $(1-\rho)\bar\Ent$.
\end{enumerate}
\end{proposition}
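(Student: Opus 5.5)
Part (i) follows from additivity alone. With $Z$ observed, the complete-data log-likelihood is the observed-data log-likelihood, and by hypothesis $\loglik_{\mathrm c}(\bm\theta)=\sum_b\loglik_b(\bm\theta_b)$. Differentiating twice gives $\partial^2\loglik_{\mathrm c}/\partial\bm\theta_b\partial\bm\theta_{b'}^\top=0$ identically for $b\neq b'$, because $\loglik_b$ does not depend on $\bm\theta_{b'}$. Taking expectations, $\mathcal I(\bm\theta)=-\E[\nabla^2\loglik_{\mathrm c}]$ is block diagonal. (The outer-product form gives the same conclusion: under regularity the cross-block score covariance equals minus the expected cross-Hessian, which is zero.)

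Variation independence, $\bm\Theta=\prod_b\bm\Theta_b$, means the joint maximisation splits into $B$ separate maximisations. The joint MLE is therefore the tuple of block MLEs, and profiling out $\bm\theta_{b'}$ shifts $\loglik_b$ only by an additive constant. Each block's profile likelihood is thus exactly invariant to the others. Under the usual regularity conditions, asymptotic normality with inverse information $\mathrm{diag}(\mathcal I_1^{-1},\dots,\mathcal I_B^{-1})$ gives asymptotic independence of the block estimators. Eq.~\eqref{eq:split} is the concrete instance: $\logT$, $\logS$ and $\logW$ depend on disjoint parameters.

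For (ii), I would invoke Louis' identity as a standard result rather than rederive it. The missing-information principle writes $\log p(O;\bm\theta)=\loglik_{\mathrm c}-\log p(z^{\mathrm{mis}}\mid O;\bm\theta)$. Differentiating twice and taking the conditional expectation over $z^{\mathrm{mis}}\mid O$ gives $-\nabla^2\log p(O)=\E[-\nabla^2\loglik_{\mathrm c}\mid O]-\Cov_{z^{\mathrm{mis}}\mid O}(\nabla\loglik_{\mathrm c})$. Taking $\E_O$ then yields the stated form of $\mathcal I_{\mathrm{obs}}$.

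The first term is block diagonal by the argument for (i), applied pointwise in the labels. So the $(b,b')$ off-diagonal block is exactly minus the expected conditional cross-covariance of the block scores $\nabla_{\bm\theta_b}\loglik_b$ and $\nabla_{\bm\theta_{b'}}\loglik_{b'}$. If $\loglik_b$ is label-free in $z^{\mathrm{mis}}$, its score is $O$-measurable, i.e.\ constant under the conditional law. Its covariance with anything then vanishes, so the block stays orthogonal to all others. In the model, $\logS$ is not label-free (because of $Z_{i_e}$), whereas a pair-attributed $\logW$ is. If both blocks depend on the labels, the covariance is generically nonzero, and I would exhibit this by a one-event example, e.g.\ the explaining-away sign discussed earlier.

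The quantitative part is the main obstacle, and I would keep it modest. By Cauchy--Schwarz, $|\Cov(X,Y)|\le\sqrt{\mathrm{Var}X\,\mathrm{Var}Y}$, so it suffices to bound each block's conditional score variance. For a bounded score, $|\nabla\loglik_b|\le G$, the conditional variance is at most $G^2$ times the probability that $z^{\mathrm{mis}}$ differs from its posterior mode. Fano/Pinsker-type inequalities bound that probability by a function of $\Ent(z^{\mathrm{mis}}\mid O)$ that tends to zero as the entropy does; this gives the vanishing claim.

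For the scaling under uniform revelation, revealing a fraction $\rho$ of the labels independently leaves, in expectation, $(1-\rho)$ of the per-event posterior entropy. I would first treat the per-event contributions as weakly dependent and sum covariances, obtaining a coupling proportional to $(1-\rho)\bar\Ent$ to first order. I would state this as a leading-order scaling rather than an exact law, since label dependence across events (through $S_i$ and $Z_{i_e}$) prevents an exact factorisation.
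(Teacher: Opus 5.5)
Your proposal is correct and follows the paper's proof. Part (i) rests on the identically vanishing cross-block Hessians of $\loglik_{\mathrm c}$, and part (ii) on Louis' identity applied to the separable score, where a label-free block's score is $O$-measurable, so its conditional covariance with every other block vanishes. The paper only asserts that the coupling degenerates as the posterior concentrates and does not derive the $(1-\rho)\bar\Ent$ scaling, so your quantitative additions go beyond it. Two small corrections there: Fano's inequality runs the wrong way (it lower-bounds the error probability), and the step you need is $\Prob(z^{\mathrm{mis}}\neq\hat z\mid O)=1-\max_z p(z\mid O)\le 1-e^{-\Ent(z^{\mathrm{mis}}\mid O)}$, which follows from $\Ent\ge-\log\max_z p(z\mid O)$; and the variance bound should carry $4G^2$ rather than $G^2$.
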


\begin{proof}
(i) By separability the mixed second derivatives $\partial^2\loglik_{\mathrm c}/
(\partial\bm\theta_b\,\partial\bm\theta_{b'}^{\!\top})=0$ for $b\neq b'$, since $\loglik_b$ does
not contain $\bm\theta_{b'}$; taking expectations gives block-diagonal $\mathcal I$, whose
inverse (the asymptotic MLE covariance) is block diagonal. (ii) Louis' identity applied to the
separable score $\nabla\loglik_{\mathrm c}=(\nabla_{\bm\theta_b}\loglik_b)_b$: a label-free
block's score is $\sigma(O)$-measurable, so its conditional covariance with any statistic
vanishes, giving the stated off-diagonal; the vanishing follows because the
conditional covariance degenerates as the posterior concentrates. The detailed computation for the
model of this paper is in the Supplementary Material, Sec.~S9.
\end{proof}

The unweighted model of this paper is the two-block case ($B=2$) with attribution
$z^{\mathrm{mis}}=\{i_e\}$: the timing block (through $H_i$) and the selection block (through
the ordered choice) both depend on the initiator. By
Proposition~\ref{prop:orthogonality}(i) an observed initiator makes them orthogonal
($\mathrm{corr}(\hat\nnode,\hat\kappa)=0$ exactly, verified in the Results), while by (ii) a
latent initiator couples them, with magnitude determined by the initiator-posterior entropy,
which is the confound documented empirically in the Results. In particular the profile likelihood for $\nnode$ does
not depend on $\kappa$, and conversely. Adding the optional weight block gives the weighted
generalisation directly.

\begin{proposition}[Weight identifiability under a latent initiator]
\label{prop:weight}
For the three-block model $\bm\theta=(\thT,\thS,\thW)$ with
$\loglik_{\mathrm c}=\logT+\logS+\logW$:
\emph{(i)} with the initiator observed, the Fisher information is block diagonal in
$(\thT,\thS,\thW)$ under either weight regime, so all three mechanisms are mutually
orthogonal; \emph{(ii)} with the initiator latent and a \emph{pair-attributed} weight
($H_{ij}$), the weight block stays orthogonal and only timing and selection confound, exactly
as in the unweighted model; \emph{(iii)} with the initiator latent and an
\emph{initiator-attributed} weight ($H_i$), all three blocks depend on the same latent label
and the missing information is a full $3\times3$ coupling, so the weight mechanism is
non-identifiable jointly with node timing and tie selection. The magnitude of the entire
coupling is set by one scalar, the initiator-posterior entropy $\Ent(\{i_e\}\mid O)$.
\end{proposition}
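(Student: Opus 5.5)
The plan is to read Proposition~\ref{prop:weight} off Proposition~\ref{prop:orthogonality} as its three-block instance, with $B=3$ and the missing attribution $z^{\mathrm{mis}}=\{i_e\}$ (equivalently $\bm\sigma$ of the ``Latent initiator'' subsection). The first step is to check the hypotheses. By Eq.~\eqref{eq:split}, $\loglik_{\mathrm c}=\logT(\thT)+\logS(\thS)+\logW(\thW)$, and the parameter space is the product $\bm\Theta_{\mathrm T}\times\bm\Theta_{\mathrm S}\times\bm\Theta_{\mathrm W}$. The normalisation~\eqref{eq:bnorm} constrains only $\bm b$, which lies inside $\thS$, so the blocks remain variation-independent. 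I will also verify that each block depends on no other block's parameters. $\logT$ involves only $\bm a,\nnode,\gnode$ through $S_i$ and $C_i$. $\logS$ involves only $\bm b,\kappa,\gedge$ through $m_{ij}$, $M_i$ and $Z_i$. $\logW$ in Eq.~\eqref{eq:logW} involves only $\thW$, evaluated on the realised weights and on one of the histories $H_i$ or $H_{ij}$. With these checks in place, part (i) follows immediately from Proposition~\ref{prop:orthogonality}(i), and it holds under either weight regime, because regime only changes which labels $\logW$ reads, never which parameters it reads.

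For (ii) and (iii) I will apply Proposition~\ref{prop:orthogonality}(ii), which reduces the question to deciding, block by block, whether the score is $\sigma(O)$-measurable, where $O=\{(t_e,\{u_e,v_e\},w_e)\}$.
\begin{itemize}
\item The timing score depends on the labels through $H_i$, $S_i$ and $C_i$, so it is label-dependent.
\item The selection score depends on the labels through the ordered pair $(i_e,j_e)$ in $\log Z_{i_e}$ and in $b_{j_e}$, although $m_{ij}$ itself is label-free. It is therefore label-dependent.
\item For a pair-attributed weight, $\pi_{\mathrm W}(w_e\mid\hist_{t_e^-})$ depends only on the unordered pair, $H_{ij}$ and past weights, all of which lie in $O$. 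Hence $\nabla_{\thW}\logW$ is $O$-measurable, its conditional covariance with any statistic vanishes, and the $(\mathrm W,\mathrm T)$ and $(\mathrm W,\mathrm S)$ blocks of the missing information are zero. The only surviving off-diagonal block is the $(\mathrm T,\mathrm S)$ block, exactly as in the unweighted model.
\item For an initiator-attributed weight, $\logW$ reads $i_e$ and $H_{i_e}$, so all three scores are label-dependent and every off-diagonal block of $\E_O[\Cov_{z\mid O}(\nabla\loglik_{\mathrm c})]$ is generically nonzero.
\end{itemize}

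To justify ``generically'' I will exhibit a configuration with a nonzero covariance. Take a single ambiguous event whose two orientations give different values of $(\nabla_{\thT}\logT,\nabla_{\thW}\logW)$ that move in the same direction, for instance when the more active endpoint also carries the larger weight. Perturbing parameters around this configuration keeps the covariance nonzero on an open set.

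The last claim is that a single scalar controls the size of the coupling. The entropy statement of Proposition~\ref{prop:orthogonality}(ii) applies to every block. All nonzero off-diagonal blocks are covariances under the same posterior of $\{i_e\}$. Each has magnitude at most $\bigl(\Var\cdot\Var\bigr)^{1/2}$ by Cauchy--Schwarz, and each of those variances vanishes as $\Ent(\{i_e\}\mid O)\to0$. I expect this to be the main obstacle: turning ``set by one scalar'' into an inequality. The first approach I would try is the binary-label case. For $\sigma_e$ with posterior $p_e$, the variance of a bounded score difference $\Delta_e$ is $p_e(1-p_e)\Delta_e^2$, which is bounded by a constant multiple of the binary entropy $h(p_e)$. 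Summing over events, and using the chain rule to handle cross-event terms, gives a bound of order $\Ent$, with the scores assumed to be bounded on compact parameter sets, for instance via Eq.~\eqref{eq:gammabounds}. The cross-event dependence through $S_i$ and $Z_i$ is what makes this delicate. For that reason I would state the bound as an upper bound proportional to $\Ent$ and leave the exact proportionality to the Supplementary computation, Sec.~S9.
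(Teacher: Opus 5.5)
Your proposal is correct and follows the paper's own proof: instantiate Proposition~\ref{prop:orthogonality} with $B=3$ and $z^{\mathrm{mis}}=\{i_e\}$, note that timing and selection are label-dependent while the weight block is label-free exactly when it is pair-attributed, and read off (i)--(iii), with the entropy scaling inherited from part~(ii) of the principle. Your hypothesis checks, the single-event genericity example and the planned Cauchy--Schwarz/binary-entropy bound are detail the paper does not supply, since it only asserts that the conditional covariance vanishes as the posterior concentrates; reading ``set by $\Ent(\{i_e\}\mid O)$'' as an upper bound proportional to the entropy is therefore the defensible interpretation.
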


\begin{proof}
Apply Proposition~\ref{prop:orthogonality} with $z^{\mathrm{mis}}=\{i_e\}$. Timing and
selection are label-dependent; the weight block is label-free iff it is pair-attributed.
Part~(i) is case~(i) of the principle. For (ii), a label-free weight block has zero
off-diagonal with every block, leaving the timing--selection coupling. For (iii), an
initiator-attributed weight block is label-dependent and so couples to both, producing the
full $3\times3$ block. Scaling with the residual posterior entropy is part~(ii) of the principle.
\end{proof}

\begin{corollary}[Label-free identifiability dichotomy]
\label{cor:dichotomy}
Under a latent initiator, a mechanism is identifiable orthogonally to the others if and only
if its likelihood is \emph{label-free} (a function of the unordered contact history $H_{ij}$
alone). Node self-excitation ($H_i$) is intrinsically label-dependent and cannot be so
identified; tie reinforcement and pair-attributed weights ($H_{ij}$) can. Weight attribution
therefore decides whether the mark inherits the confound.
\end{corollary}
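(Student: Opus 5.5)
The plan is to derive the corollary directly from Proposition~\ref{prop:orthogonality}(ii), specialised to $z^{\mathrm{mis}}=\{i_e\}$ as in Proposition~\ref{prop:weight}. I would prove the two directions of the ``if and only if'' separately, and then read off the three named mechanisms by checking which history each block's likelihood depends on.

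\emph{Sufficiency.} Suppose a block $\loglik_b$ is a function of the label-free histories $H_{ij}$ and of the observed contacts $c_e$ alone. Then its score $\nabla_{\bm\theta_b}\loglik_b$ is $\sigma(O)$-measurable. Its conditional covariance with any other block's score under $z^{\mathrm{mis}}\mid O$ therefore vanishes identically. By Louis' identity, the $(b,b')$ off-diagonal block of $\mathcal I_{\mathrm{obs}}$ is zero for every $b'$, and the diagonal block equals $\mathcal I_b$ from the complete data. The block is thus orthogonal and retains its complete-data information.

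For the weight block this is exactly case (ii) of Proposition~\ref{prop:weight}. For tie reinforcement, the dyad memories $m_{ij}$ and their row sums $M_i$ are label-free. The first task is to exhibit a label-free version of the selection channel. I would try to show that the $\kappa$-dependent part of $\logS$ can be reorganised as a function of the undirected contact stream, for example through the symmetrised pair intensity $\lambda_{ij}$. This rests on the fact that the numerator $b_j+\kappa m_{ij}$ depends on $j$ only through a label-free memory. I expect this step to be the main obstacle. As written, $\logS$ contains $Z_{i_e}$ and the ordered choice of $j_e$, both of which depend on the initiator. So the claim ``tie reinforcement can'' must be read either as holding for the label-free reinforcement statistic, or as holding up to an explicit residual coupling that is controlled by the posterior entropy. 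I would state precisely which version is meant and confine the orthogonality claim to that label-free part.

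\emph{Necessity.} Suppose a block's likelihood genuinely depends on $z^{\mathrm{mis}}$, as the timing block does through $H_i$, $S_i$ and $C_i$. It remains to show that its score covariance with some other label-dependent block is generically nonzero whenever $\Ent(z^{\mathrm{mis}}\mid O)>0$. Since ``generically'' is the content of Proposition~\ref{prop:orthogonality}(ii), the argument only needs a single witness configuration.

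To build the witness, take one ambiguous event whose two candidate labellings change both $S_{i_e}$ and $Z_{i_e}$. Under its posterior, the timing score and the selection score are both nonconstant functions of a single Bernoulli variable. Their covariance is then $p(1-p)\,\Delta_{\mathrm T}\Delta_{\mathrm S}\neq0$, where $\Delta_{\mathrm T}$ and $\Delta_{\mathrm S}$ are the changes in the two scores when the label flips. Because the covariance is analytic in the parameters, it is nonzero on an open dense set.

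This shows that node self-excitation is intrinsically label-dependent, since $H_i$ is defined by the initiator. The final sentence of the corollary about weight attribution then follows immediately by combining the sufficiency and necessity directions with the pair-attributed/initiator-attributed split in Proposition~\ref{prop:weight}.
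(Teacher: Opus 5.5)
Your sufficiency direction is the paper's own route. The corollary is read off from Proposition~\ref{prop:orthogonality}(ii) with $z^{\mathrm{mis}}=\{i_e\}$: a label-free score is $\sigma(O)$-measurable, so every conditional covariance involving it vanishes, diagonal included. Mechanisms are then classified by history, $H_i$ label-dependent and $H_{ij}$ label-free. You are right that the tie clause does not survive this classification for the paper's selection block. Since $\partial_\kappa\logS=\sum_e\bigl[m_{i_ej_e}/(b_{j_e}+\kappa m_{i_ej_e})-M_{i_e}/Z_{i_e}\bigr]$, swapping an event's endpoints changes the score through $Z_{i_e}$, $M_{i_e}$ and $b_{j_e}$. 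This holds even for uniform $\bm b$ unless $M_{u_e}=M_{v_e}$. The paper itself lists selection as label-dependent, both in the proof of Proposition~\ref{prop:weight} and in its reading of Proposition~\ref{prop:orthogonality}, yet files tie reinforcement under $H_{ij}$ in the corollary without reconciling the two.

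Neither of your repairs closes this. The intensity of the unlabelled contact process is the projection $\E[\rho_u\pi_{v\mid u}+\rho_v\pi_{u\mid v}\mid\text{unlabelled history}]$, whose logarithm entangles $\thT$ and $\thS$, so it gives no separate block. Nor can you confine the claim to a label-free part of $\logS$. The label-free numerator and the label-dependent normaliser share $\kappa$, so there is no variation-independent label-free sub-block of the kind Proposition~\ref{prop:orthogonality} requires.

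The genuine gap is that your two directions cannot both go through, because the Louis off-diagonal is symmetric. Your necessity witness yields $\Cov_{z\mid O}(\nabla_{\thT}\logT,\nabla_{\thS}\logS)\neq0$ precisely because $Z_{i_e}$ flips with the label. That same nonzero block makes the selection block non-orthogonal, so the witness refutes ``tie reinforcement can'' for $\logS$ as written.

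Conversely, suppose you altered the model so that selection were genuinely label-free. Then in the unweighted or pair-weighted case timing would be the only label-dependent block, and your own sufficiency argument would kill all its off-diagonals. Node self-excitation would then be orthogonally identifiable, losing only the diagonal information $\E_O[\Cov_{z\mid O}(\nabla_{\thT}\logT)]$.

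So the ``only if'' is false as a general statement. It needs the hypothesis that a second block shares the latent label, and then that partner is confounded as well. The two named clauses can hold together only if timing's partner is something other than selection, e.g.\ an initiator-attributed weight alongside a label-free selection law, which is not the paper's model.

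What your argument does establish is a weaker, consistent statement:
\begin{itemize}
\item label-free blocks are orthogonal with no information loss;
\item any two label-dependent blocks generically confound;
\item hence a pair-attributed weight stays orthogonal while timing and selection confound each other, exactly as Proposition~\ref{prop:weight}(ii) states.
\end{itemize}

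Separately, a data set with nonzero conditional covariance only shows that the observed-information off-diagonal is nonzero. For the expected block $-\E_O[\Cov_{z\mid O}(\cdot,\cdot)]$ you need a non-cancellation argument over $O$, e.g.\ the explaining-away sign, before analyticity delivers genericity.
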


As a concrete case, a contact \emph{duration} whose expected value reinforces on its own
timescale, $\mu_{ij}(t)=d_0+\kappa_d\,m^{\mathrm D}_{ij}(t)$ with $m^{\mathrm D}$ a tie-memory
kernel of rate $\gamma_d$ (distinct from $\gedge$, so the block is variation-independent), is
pair-attributed and therefore identifiable under a latent initiator by
Corollary~\ref{cor:dichotomy}: it measures whether reinforcement acts through interaction
frequency ($\kappa$) or length ($\kappa_d$). It enriches the already-identifiable tie channel
and does not bear on node identifiability.

\begin{remark}[One order parameter for directedness, weighting and marking]
Because the whole coupling is controlled by $\Ent(\{i_e\}\mid O)$, the label-revelation
experiment used for the two-block confound (Results) applies verbatim: revealing a fraction
$\rho$ of true initiators collapses every off-diagonal along the same $(1-\rho)\bar\Ent$ order
parameter. Directedness, weighting and marking are thus not separate identifiability questions
but one: \emph{is the attribution observed?}, which is answered by a single missing-information
object.
\end{remark}

\paragraph*{Numerical confirmation of the weighted case.} We test
Proposition~\ref{prop:weight} on the three-mechanism process ($N=25$,
$\sim\!1.7\times10^3$ events, $\nnode=0.5$, $\kappa=10$) with a mark whose log-scale couples
with strength $c_W$ either to the initiator's self-excitation state
(\emph{initiator}-attributed, $H_i$) or to the $(i,j)$ tie-memory state
(\emph{pair}-attributed, $H_{ij}$), on the same realised network. Masking a fraction $1-\rho$
of the true initiators and imputing them, a numerical experiment (Supplementary Material) confirms the dichotomy: for
the initiator-attributed mark the estimator spread $\sigma_{\hat c_W}$ is large under a latent
initiator and collapses as labels are revealed, and its missing-information off-diagonal with
$\hat\kappa$ is substantial, meaning the mark joins the node/tie confound as a full $3\times3$
coupling; for the pair-attributed mark $\sigma_{\hat c_W}\!\approx\!0$ at every $\rho$ and the
off-diagonals vanish, so it stays identifiable regardless of the initiator
(Corollary~\ref{cor:dichotomy}). (This is a proof-of-concept on one realisation family; once
the spread collapses the correlation coefficient is ill-conditioned, so the reported
discriminators are the spread and the covariance, which vanish honestly.)

\begin{remark}[The result is kernel-agnostic]
\label{rem:kernel_agnostic}
The factorisation~\eqref{eq:split}, and hence Proposition~\ref{prop:orthogonality}, uses only
that the node-timing intensity depends on node $i$'s own past while the target law depends on
the interaction history, not the exponential form of the kernel. Replacing
$\phi_{\mathrm n}(\tau)=\gnode e^{-\gnode\tau}$ by \emph{any} admissible timing kernel, such as a power
law, a gamma kernel, or the discrete empirical distribution of inter-event times in the
regular spiking of the neural data, leaves the timing block a function of node histories
alone and the selection block untouched, so the two mechanisms remain orthogonal under an
observed initiator for the whole class. The exponential is the instance we estimate, chosen
for the $O(1)$ recursion of Eq.~\eqref{eq:recursion}. The identifiability theorem is therefore
a property of the model \emph{structure}; the empirical failure reported below is specific to
fitting the exponential instance to data whose node memory, where present, is not of that
form. Whether a richer kernel class turns the smooth crossover of
Fig.~\ref{fig:plane_validation} into a sharp transition in the space of kernel configurations
we probe directly (Supplementary Material, Sec.~S12), simulating exact self-exciting processes
across a physically-grounded family that runs from short-memory (exponential) to scale-free
(power-law / Omori, as in seismicity and economic order flow). It does not: the boundary stays
a size-independent crossover along the kernel axis too. The probe instead exposes a
\emph{practical} limit that the structural theorem cannot see. Scale-free node memory spreads
its correlation across all lags, so its per-lag signature is faint and the single-exponential
estimator is effectively blind to it ($R\!\to\!1$ at any system size); and because
self-excitation couples memory to burstiness, no Hawkes kernel (whether short-memory or scale-free)
reaches the resolvable corner that a burstiness-free surrogate attains. Certifying the
\emph{absence} of node memory in a scale-free system therefore requires a scale-free-aware
estimator rather than the exponential default, a boundary of the method we make explicit rather
than assume away.
\end{remark}

The content of Proposition~\ref{prop:orthogonality} is that the node mechanism and the
link mechanism are informed by \emph{disjoint sufficient statistics}. The timing block
is a function of the inter-event structure of each node's initiation sequence; the
selection block is a function of which target was chosen given that an event occurred,
and is invariant to when the events occurred. No amount of node burstiness can be
reinterpreted as tie reinforcement, or the reverse, because neither enters the other's
estimating equation. This is a structural property of the parametrisation of
the Methods, not a numerical observation about a particular dataset, and it
fails for the mixing form~\eqref{eq:badmodel}, in which a single parameter multiplies both
channels.

Proposition~\ref{prop:orthogonality} should not be read as a claim that the mechanisms
are always easy to estimate. It says the two estimation problems do not interfere; each
can still be poorly determined on its own if the data are short, which is the subject of
the Methods. Two boundary cases are worth recording. If $\kappa = 0$ the
selection likelihood~\eqref{eq:logS} does not depend on $\gedge$ at all, so the tie memory
timescale is unidentified at the pure node-driven pole; likewise $\gnode$ is unidentified
when $\nnode = 0$. Both are ordinary boundary non-identifiabilities of nuisance
timescales under a null, and they matter for the tests of the Methods.

\subsubsection*{Loss of orthogonality under a latent initiator}

When the label is unobserved the log-likelihood is the logarithm of the
sum~\eqref{eq:marginal}, and no analogue of Eq.~\eqref{eq:split} holds: each
configuration $\bm\sigma$ contributes a product of a timing factor and a selection
factor, but the mixture over $\bm\sigma$ couples them. Concretely, the posterior over
labels depends on both $\rho$ and $\pi$,
\begin{equation}
  \Prob(\sigma_e = 0 \mid \hist_{t_e^-}, \bm\theta)
  \;\propto\; \rho_{u_e}(t_e)\, \pi_{v_e \mid u_e}(t_e),
  \label{eq:posteriorlabel}
\end{equation}
so a parameter change that makes node memory more plausible also reassigns labels, which
in turn changes the tie-memory sufficient statistics. The information matrix acquires
nonzero off-diagonal blocks and the two mechanisms become confounded.

We quantify this with the across-imputation correlation
$\mathrm{corr}(\hat\nnode, \hat\kappa)$ computed over the label draws of the EM scheme
of the Methods. Proposition~\ref{prop:orthogonality} guarantees this quantity is
exactly zero when labels are observed, so any departure from zero measures precisely the
mechanism confound induced by the missing label. the Results section reports it for
synthetic and empirical data.

\subsubsection*{What a point estimate cannot tell you}
\label{sec:coverage}

The question of interest is not the value of $\hat\nnode$ but whether the data determine
it. We therefore report, in place of a point estimate and a threshold rule, three
quantities.

\paragraph*{Profile likelihood.} With all remaining parameters (including the $N$
nuisance activities) re-optimised at each value,
\begin{equation}
  \loglik_{\mathrm{p}}(\nnode) = \max_{\gnode,\, \bm{a}} \logT(\bm{a},\nnode,\gnode),
\end{equation}
and the $95\%$ interval is
$\{\nnode : 2[\loglik_{\mathrm{p}}(\hat\nnode) - \loglik_{\mathrm{p}}(\nnode)] \le
\chi^2_{1,0.95}\}$. The \emph{curvature} of this curve is the identifiability statement;
a flat profile means the mechanism is not recoverable however sharp the argmax looks.

\paragraph*{Observed information.} The Hessian of the concentrated negative
log-likelihood at the optimum, inverted to give asymptotic standard errors. Comparing
these against the across-replicate dispersion of the estimates is a check on whether the
asymptotics have taken hold at the available sample size; a large discrepancy is itself a
finding about the data requirement.

\paragraph*{Coverage.} Over independent replicates at known truth, the fraction of
$95\%$ Wald intervals containing the true value. Coverage is the diagnostic that
distinguishes the two failure modes that matter. An estimator that is imprecise but
honest has wide intervals and nominal coverage; an estimator in a degenerate regime has
narrow intervals and coverage far below nominal, i.e.\ it is \emph{confidently wrong}.
Only the second invalidates a mechanism claim. the Results section maps coverage
over $(N, T, \nnode, \kappa)$ and locates the boundary between the two regimes.

\subsection*{Estimation}
\label{sec:estimation}

\subsubsection*{Exact profiling of the baseline activities}
\label{sec:profiling}

The $N$ baseline activities are nuisance parameters that grow with the system size, so
they are removed by profiling rather than estimated jointly. Fix $(\nnode, \gnode)$. From
Eq.~\eqref{eq:logT},
\begin{equation}
\begin{split}
  \frac{\partial \logT}{\partial a_i}
  &= \!\!\sum_{e:\, i_e = i} \!\! \frac{1}{a_i + \nnode S_i(t_e^-)} \;-\; T , \\
  \frac{\partial^2 \logT}{\partial a_i^2} &< 0 ,
\end{split}
  \label{eq:score_a}
\end{equation}
so $\logT$ is strictly concave and separable in $\bm{a}$, and the profile maximiser is
obtained node by node.

\begin{lemma}[Bracketing]
\label{lem:bracket}
Let $K_i = |\{e : i_e = i\}| \ge 1$. The stationary point of Eq.~\eqref{eq:score_a}, if
it exists in $[0,\infty)$, lies in $[0, K_i/T]$; otherwise the maximiser is the boundary
point $a_i = 0$.
\end{lemma}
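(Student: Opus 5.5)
The plan is to study, for fixed $(\nnode,\gnode)$, the one-dimensional function
\begin{equation*}
g_i(a) = \sum_{e:\,i_e=i} \frac{1}{a + \nnode S_i(t_e^-)} - T ,
\end{equation*}
which is the score in Eq.~\eqref{eq:score_a}. The argument rests on three properties: concavity, monotonicity of $g_i$, and a lower bound on each summand's denominator. Since $\nnode\ge 0$ and $S_i(t_e^-)\ge 0$ by Eq.~\eqref{eq:states}, every denominator satisfies $a+\nnode S_i(t_e^-)\ge a$.

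First I confirm that $g_i$ is strictly decreasing on $(0,\infty)$. Each summand is strictly decreasing in $a$, and this is the strict concavity already noted after Eq.~\eqref{eq:score_a}. Consequently there is at most one stationary point, and when one exists in $(0,\infty)$ it is the unique maximiser of the concave, separable function $a\mapsto\logT$.

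Next I bound the location of that stationary point. Suppose $a^\star>0$ satisfies $g_i(a^\star)=0$. Using $a^\star+\nnode S_i\ge a^\star$, each summand is at most $1/a^\star$, so
\begin{equation*}
T = \sum_{e:\,i_e=i}\frac{1}{a^\star+\nnode S_i(t_e^-)} \le \frac{K_i}{a^\star}.
\end{equation*}
Rearranging gives $a^\star\le K_i/T$. Combined with $a^\star\ge 0$, this places the stationary point in $[0,K_i/T]$. The same bound yields a constructive bracket: $g_i(K_i/T)\le 0$, so root-finding can search on $[0,K_i/T]$.

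Finally I handle the case where no stationary point exists in $(0,\infty)$. Because $g_i$ is decreasing, this can only mean $g_i(a)<0$ for all $a>0$, i.e.\ $\lim_{a\downarrow 0}g_i(a)\le 0$. That limit is finite only when every $S_i(t_e^-)>0$; if some $S_i(t_e^-)=0$ (for instance at the first event of node $i$), then $g_i\to+\infty$ as $a\downarrow 0$ and a root exists. When $g_i<0$ throughout, $\logT$ is decreasing in $a$ on $[0,\infty)$, so the constrained maximiser is the boundary point $a_i=0$.

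The only delicate point I anticipate is this boundary case. One must check that $\logT$ stays finite at $a_i=0$, which holds exactly when all $S_i(t_e^-)>0$. That is the same condition under which $g_i$ has a finite limit at $0$, so the case analysis is consistent. Everything else is elementary monotonicity.
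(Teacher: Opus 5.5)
Your proof is correct and is essentially the paper's argument: both rest on strict monotonicity of the score and on $a+\nnode S_i(t_e^-)\ge a$. The paper evaluates the score at $a_i=K_i/T$ to get a non-positive value, while you bound the root directly via $T\le K_i/a^\star$, which is the same inequality read the other way. Your remark that the first event of node $i$ has $S_i(t_e^-)=0$ is a sharpening the paper omits. With $H_i$ built only from strict past events in $[0,T]$, it sends the score to $+\infty$ at $0^+$, so a root always exists in $(0,K_i/T]$. The boundary case $a_i=0$ can therefore only arise if the node's excitation state is initialised from pre-sample history (and $\nnode>0$).
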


\begin{proof}
The score is continuous and strictly decreasing on $(0,\infty)$. At $a_i = K_i/T$ each
summand is at most $T/K_i$, so the score is at most $K_i (T/K_i) - T = 0$. If the score
at $a_i \to 0^+$ is already non-positive (which requires $S_i(t_e^-)>0$ for every event
of $i$, i.e.\ node memory alone accounts for the node's activity), then the maximum is
at the boundary.
\end{proof}

Lemma~\ref{lem:bracket} makes bisection on $[0,K_i/T]$ safe and unconditionally
convergent. In the memoryless limit $\nnode=0$ the solution reduces to the Poisson
maximum-likelihood estimate $\hat a_i = K_i/T$, which we use as an implementation test.

Profiling, rather than plugging in a moment estimate, is what removes the scale
degeneracy discussed in the Methods: the activities adapt to whatever
$\nnode$ is under consideration, so there is no fixed exogenous normalisation left for a
mechanism parameter to absorb.

The attractiveness $\bm{b}$ is treated analogously. It is either held uniform
($b_j \equiv 1$, appropriate when node popularity is not of interest) or estimated by a
minorise--maximise fixed point for the choice model~\eqref{eq:pi}, renormalised to
satisfy Eq.~\eqref{eq:bnorm} after every sweep. The objective is monitored and the
iteration is stopped if a sweep fails to improve it.

\subsubsection*{Optimisation over the mechanism parameters}

After profiling, two two-dimensional problems remain: $(\nnode, \gnode)$ from the timing
block and $(\kappa, \gedge)$ from the selection block, which by
Proposition~\ref{prop:orthogonality} may be solved independently. Both are solved by
bounded quasi-Newton iterations on $(\nnode, \log\gnode)$ and $(\log\kappa, \log\gedge)$,
subject to Eq.~\eqref{eq:gammabounds}, and seeded from a coarse logarithmic grid scan.

The grid scan is not a refinement. If $\gnode$ is initialised at an $O(1)$ value while the
data live on a scale of hours, every memory term underflows, the gradient vanishes, and
the optimiser terminates at the starting point reporting no memory, regardless of the
truth. We observed exactly this failure on empirical data: an initial fit returned
$\hat\kappa \approx 0$ on a record with a repeated-tie fraction of $0.69$. Seeding from
the data-implied range of Eq.~\eqref{eq:gammabounds} removes it.

\subsubsection*{Monte Carlo EM for the latent initiator}
\label{sec:em}

The marginal likelihood~\eqref{eq:marginal} is approximated by Monte Carlo EM. Each
sweep draws $M$ label configurations by \emph{sequential imputation}: proceeding forward
through the event list, $\sigma_e$ is sampled from Eq.~\eqref{eq:posteriorlabel}
evaluated with the state accumulated from the labels already drawn, after which the
complete-data estimator of the Methods is applied to each imputed dataset
and the parameters averaged.

Standard errors combine within- and between-imputation variance by Rubin's
rules~\cite{Rubin1987},
\begin{equation}
  \widehat{\mathrm{Var}}(\hat\theta) = \bar{W}
    + \Bigl( 1 + \tfrac{1}{M} \Bigr) B ,
\end{equation}
with $\bar{W}$ the mean within-imputation variance and $B$ the between-imputation
variance, so the reported uncertainty includes the cost of the missing labels.

\paragraph*{What this does not guarantee.} Sequential imputation samples from a
\emph{filtering} approximation to the label posterior: $\sigma_e$ is drawn conditionally
on earlier labels but is never revised in the light of later events, whereas the exact
E-step requires the \emph{smoothing} distribution
$\Prob(\bm\sigma \mid \bm{c}, \bm\theta)$. The scheme is therefore biased by an amount
we do not currently bound. On synthetic data the bias can be isolated by comparing
against the observed-initiator fit on the same realisation, and we report both; the gap
between them is an upper bound on the sum of the approximation error and the genuine
information loss, and we do not attribute it entirely to the latter. Removing the
approximation requires a Metropolis-within-Gibbs sampler over labels, which we leave to
future work.

\subsubsection*{Tests of the two boundary nulls}
\label{sec:tests}

Two hypotheses are of direct interest: $H_0^{\mathrm{B}}: \kappa = 0$ (no tie
reinforcement; target selection is purely node-driven) and
$H_0^{\mathrm{A1}}: \nnode = 0$ (no node self-excitation; node activity is memoryless).
Both are tested by likelihood ratio.

Both nulls lie on the boundary of the parameter space, since $\kappa \ge 0$ and
$\nnode \ge 0$. The asymptotic null distribution of the likelihood-ratio statistic is
then not $\chi^2_1$ but the chi-bar-squared mixture
$\tfrac12 \chi^2_0 + \tfrac12 \chi^2_1$~\cite{SelfLiang1987}. Referring the statistic to
$\chi^2_1$, as we do, therefore produces $p$-values that are conservative by roughly a
factor of two. We report them uncorrected rather than silently adjusted; the correction
is available if a borderline case ever depends on it. A further complication is that the
corresponding timescale ($\gedge$ under $H_0^{\mathrm{B}}$, $\gnode$ under
$H_0^{\mathrm{A1}}$) is unidentified under the null, which is the Davies
problem~\cite{Davies1987}; with the bounded parameter ranges of
Eq.~\eqref{eq:gammabounds} this affects the calibration of the test but not the
consistency of the estimator, and we verify the operating characteristics by simulation
rather than relying on the asymptotic reference.

\subsubsection*{Implementation}

The reference implementation is in Python (NumPy/SciPy) and is the version used for all
numbers reported here. A Julia port drives the large parameter sweeps; the two are
checked against a shared fixture that pins the event sequence and the likelihood values
evaluated on it, and must agree to $10^{-8}$. Since the two languages have different
random number generators, the simulators are verified separately against the analytic
stationary rate, Eq.~\eqref{eq:rate}. Code and fixtures are available at
\url{https://github.com/micheletizzani/temporal-memory-epidemics}.

\subsection*{Data and preprocessing}
\label{sec:data}

\subsubsection*{Datasets}

We use three SocioPatterns face-to-face proximity datasets, chosen to span very
different interaction regimes at comparable instrumentation.

\begin{itemize}
\item \textbf{High school}~\cite{Mastrandrea2015}: \hsNodes{} students,
  \hsRawSamples{} proximity samples over \hsSpanHours{}~h spanning \hsDays{} school
  days. Strong class structure and a hard daily cycle.
\item \textbf{Conference (SFHH)}~\cite{Genois2018}: \sfNodes{} attendees,
  \sfRawSamples{} samples over \sfSpanHours{}~h. Weak prior social structure, many
  first-time encounters.
\item \textbf{Hospital ward}~\cite{Vanhems2013}: \hoNodes{} individuals (patients,
  nurses, doctors, administrative staff), \hoRawSamples{} samples over
  \hoSpanHours{}~h. Small, role-structured, shift-driven.
\end{itemize}

We include four further datasets of different kinds as a contrast:

\begin{itemize}
\item \textbf{Baboons (co-presence)}~\cite{Gelardi2020}: an animal proximity record of
  $13$ baboons wearing the same class of wearable sensor, with $\sim\!6\times10^{4}$
  co-presence samples. It tests whether the conclusions hold beyond human contact, in a
  different species and social system, and it is the second proximity record used in the
  epidemic-forecast comparison (SM).
\item \textbf{Email (EU institution)}~\cite{Paranjape2017}: a directed communication log
  of \gofEmContacts{} messages among \gofEmN{} active accounts. Unlike the proximity
  records it is not recorded on a sampling grid and it is genuinely directed
  (sender$\rightarrow$recipient), so it serves both to probe the estimator away from the
  RFID instrument and, as the ``Goodness of fit'' subsection shows, to exhibit a residual signature
  distinct from the proximity data: an over-dispersion exceeding the zero atom rather than
  the atom alone.
\item \textbf{CollegeMsg (Direct Msgs)}~\cite{Panzarasa2009}: a directed messaging log among students at an online social network. It is not recorded on a sampling grid and serves to probe the estimator on bursty digital communication.
\item \textbf{Allen Brain Network}~\cite{Allen2012}: an inferred functional network of neuronal activations from calcium imaging. It provides an extreme example of bursting dynamics governed by physiological thresholds.
\end{itemize}

The three proximity datasets are recorded on the same \samplingGrid{}~s grid by the same
RFID platform, so
the preprocessing below applies uniformly. The contrast between them is the point: if
the estimated mechanism pair $(\nnode,\kappa)$ is the same in a school, a conference and
a hospital ward, the estimator is measuring the instrument or generic burstiness rather
than a social mechanism, and the exercise has no content. Separation between settings,
with non-overlapping intervals, is the evidence that it does.

\subsubsection*{Sessionisation}
\label{sec:sessionisation}

A SocioPatterns record is a \emph{sampled contact}, not an event: each line asserts that
two individuals were in proximity during a \samplingGrid{}~s window. A pair in
continuous conversation therefore emits a line every \samplingGrid{}~s. Fitting a
self-exciting process directly to these samples measures the sampling grid: the fitted
tie memory would recover $\gedge \approx 1/\samplingGrid{}\,\mathrm{s}^{-1}$ and a large
$\kappa$ as an artefact of the instrument, with no social content whatsoever.

We therefore \emph{sessionise}: consecutive samples of the same dyad separated by at
most $\Delta_{\mathrm{sess}} = \sessionGap{}$~s (two missed frames) are collapsed into a
single contact event, timed at the start of the run. The threshold is reported and its
sensitivity examined; the pipeline supports disabling sessionisation
($\Delta_{\mathrm{sess}}=0$) so that the magnitude of the artefact can be seen directly,
and we report that comparison in the Results section.

Sessionisation discards contact \emph{duration}, which is a real limitation. The present
model treats contacts as instantaneous marks; a duration-aware extension is natural but
changes the likelihood structure and is left to future work.

\subsubsection*{Resolution cap and tied timestamps}

Even after sessionisation, distinct dyads share timestamps because the grid is common to
all pairs, and a node in a group conversation appears in several simultaneous contacts.
By Proposition~\ref{prop:divergence} this makes the timing likelihood unbounded, so the
decay rates are capped by Eq.~\eqref{eq:gammabounds} with $\Delta_{\min}$ equal to the
grid spacing. The number of tied events is reported for every fit.

\subsubsection*{Daily segmentation}

Human contact data has a hard circadian structure with long overnight gaps containing no
events. A memory kernel fitted across such a gap spends its parameters explaining the
gap rather than the memory, biasing $\gnode$ and $\gedge$ towards the gap timescale. We
therefore fit each day separately, re-zeroing the time origin, and report the
between-day dispersion of the estimates rather than pooling them into a single number.

This dispersion is also our strongest available consistency check in the absence
of longitudinal re-observation: a stationary memory mechanism observed over multiple split-half
protocol should produce compatible mechanism estimates. Wide between-day scatter would
indicate that the model is absorbing day-specific structure it does not represent.

\subsubsection*{Node filtering}

Nodes with fewer than \minContacts{} contacts are removed and the index set relabelled.
Such nodes contribute almost nothing to the likelihood but do enter the target-selection
normaliser $Z_i$ in Eq.~\eqref{eq:pi}, so a long tail of near-inactive nodes shifts
$\kappa$ downward by inflating the pool of available alternatives. This is a modelling
choice and is applied explicitly rather than silently; the sensitivity of the results to
the threshold is reported.

\subsubsection*{A caveat on model adequacy}
\label{sec:adequacy}

Every estimate reported in the Results section is conditional on the model of
the Methods being an adequate description of the data. Parameter recovery on
synthetic data establishes only that the estimator inverts its own generative process.
A residual analysis based on the time-rescaling theorem (under which, for a correctly
specified conditional intensity, the compensated inter-event times are i.i.d.\ unit exponentials) is the
appropriate check, and we regard it as necessary before the empirical numbers can carry
mechanistic weight. The ``Goodness of fit'' subsection reports the rescaled-time Kolmogorov--Smirnov statistics with
parametric-bootstrap calibration for each dataset, and states plainly which settings the
model describes and which it does not.

\section*{Data availability}
The empirical datasets analysed in this study are publicly available in primary open-access repositories:
\begin{itemize}
\item Face-to-face proximity datasets (High School~\cite{Mastrandrea2015}, Conference SFHH~\cite{Genois2018}, and Hospital Ward~\cite{Vanhems2013}) are hosted by the SocioPatterns collaboration at \url{http://www.sociopatterns.org/datasets/}.
\item The baboon co-presence proximity dataset~\cite{Gelardi2020} is archived on Zenodo at \url{https://doi.org/10.5281/zenodo.3820287}.
\item Temporal communication networks (Email-Eu-core, CollegeMsg, Stack Overflow, and Wikibooks~\cite{Paranjape2017,Panzarasa2009}) are hosted by the Stanford Network Analysis Platform (SNAP) at \url{https://snap.stanford.edu/data/}.
\item The Alibaba microservice cluster traces~\cite{Luo2021} are available via GitHub at \url{https://github.com/alibaba/clusterdata} (DOI: 10.1145/3472883.3486987).
\item The cortical calcium imaging and functional connectome data~\cite{MICrONS2025} are available through the MICrONS Consortium at \url{https://www.microns-explorer.org/} (DOI: 10.1038/s41586-025-08790-w).
\end{itemize}

\section*{Code availability}
All simulation routines, profile maximum-likelihood estimators, MC-EM latent-initiator algorithms, inter-event shuffle diagnostics, and epidemic spreading simulation scripts are openly available at \url{https://github.com/micheletizzani/temporal-memory-epidemics}. A self-contained reference implementation containing unit test suites, parity validation fixtures, and a one-step reproduction demonstration script is included directly within the repository.

\begin{acknowledgments}
M.T. acknowledges funding from Technical University of Denmark.
\end{acknowledgments}

\section*{Author contributions}
M.T. designed the research, performed the analysis, and wrote the manuscript.

\section*{Competing interests}
The author declares no competing interests.

\bibliography{refs}

\end{document}